\documentclass[11pt,letterpaper]{article}

\usepackage{fullpage}
\usepackage{xcolor}
\usepackage[hidelinks,colorlinks = true,
           linkcolor = blue,
            urlcolor  = blue,
            citecolor = blue,
            anchorcolor = blue,
            bookmarks=false]{hyperref}
\usepackage{graphicx,subcaption}
\usepackage{enumitem}
\usepackage{algorithm}
\usepackage{algpseudocode}
\usepackage{amsmath}
\usepackage{amssymb}
\usepackage{amsthm}
\usepackage{tikz}
\usepackage[labelfont=bf]{caption}
\usepackage{multicol}
\usepackage{multirow}
\usepackage{array}
\usepackage{booktabs}
\usepackage{longtable}
\usepackage{tabularx}

\usepackage{amsmath,amsthm}
\newtheorem{theorem}{Theorem}
\newtheorem{corollary}{Corollary}[theorem] 
\newcolumntype{L}[1]{>{\raggedright\let\newline\\\arraybackslash\hspace{0pt}}m{#1}}
\newcolumntype{C}[1]{>{\centering\let\newline\\\arraybackslash\hspace{0pt}}m{#1}}
\newcolumntype{R}[1]{>{\raggedleft\let\newline\\\arraybackslash\hspace{0pt}}m{#1}}

\newtheorem{lemma}[theorem]{Lemma}

\newcommand{\tool}{\ensuremath{\mathsf{EntropyEstimation}}}
\newcommand{\prob}{\ensuremath{\mathsf{Pr}}}
\newcommand{\satisfying}[1]{\ensuremath{sol({#1})}} %

\newcommand{\entropy}[1]{\ensuremath{H({#1})}}

\newcommand{\supp}{\mathrm{supp}}

\newcommand{\expect}{\ensuremath{\mathsf{E}}}
\newcommand{\variance}{\ensuremath{\mathsf{variance}}}

\newcommand{\EntropyEstimation}{\ensuremath{\mathsf{EntropyEstimation}}}
\newcommand{\SampleEst}{\ensuremath{\mathsf{SampleEst}}}

\newcommand{\proc}{\ensuremath{\mathsf{PROC}}}

\setlength\floatsep{0.5\baselineskip plus 3pt minus 2pt}
\setlength\textfloatsep{0.5\baselineskip plus 3pt minus 2pt}
\setlength\intextsep{0.5\baselineskip plus 3pt minus 2 pt}

\usepackage[draft]{todonotes}   %
\algnotext{EndIf} 
\algnotext{EndFor}
\usepackage{makecell}
\begin{document}
\title{A Scalable Shannon Entropy Estimator\thanks{ {\EntropyEstimation} is available at \url{https://github.com/meelgroup/entropyestimation}. A preliminary version of this work appears at International Conference on Computer-Aided Verification, CAV, 2022.  The names of authors are sorted alphabetically and the order does not reflect contribution.}}
 \author{Priyanka Golia\\
 Indian Institute of Technology Kanpur\\
 National University of Singapore
 \and
 Brendan Juba\\
 Washington University in St. Louis\\
  \and 
 Kuldeep S. Meel\\
 National University of Singapore}
\date{}
\maketitle              %
\begin{abstract}
We revisit the well-studied problem of estimating the Shannon entropy of a 
probability distribution, 
now given access to a \emph{probability-revealing conditional 
sampling} oracle. In this model, the oracle takes as input the representation
of a set $S$, and returns a sample from the distribution obtained by 
conditioning on $S$, together with the probability of that sample in the
distribution.
Our work is motivated by applications of such algorithms in Quantitative 
Information Flow analysis (QIF) in programming-language-based security. 
Here, information-theoretic quantities capture the effort required on the part
of an adversary to obtain access to confidential information. These
applications demand accurate measurements when the entropy is small. Existing
algorithms that do not use conditional samples require a number of queries
that scale inversely with the entropy, which is unacceptable in this regime,
and indeed, a lower bound by Batu et al.\ (STOC 2002) established that no algorithm using 
only sampling and evaluation oracles can obtain acceptable performance. On the other hand, prior work in the conditional
sampling model by Chakraborty et al.\ (SICOMP 2016) only obtained a high-order polynomial query complexity,
$\mathcal{O}(\frac{m^7}{\epsilon^8}\log\frac{1}{\delta})$ queries, to obtain
additive $\epsilon$-approximations on a domain of size $\mathcal{O}(2^m)$;
note furthermore that additive approximations are also unacceptable for such
applications. No prior work could obtain polynomial-query multiplicative approximations to the entropy in the low-entropy regime.

We obtain multiplicative $(1+\epsilon)$-approximations using only
$\mathcal{O}(\frac{m}{\epsilon^2}\log\frac{1}{\delta})$ queries to the
probability-revealing conditional sampling oracle. Indeed, moreover, we obtain
small, explicit constants, and demonstrate that our algorithm obtains a
substantial improvement in practice over the previous state-of-the-art methods
used for entropy estimation in QIF.
\end{abstract}

\section{Introduction}
We consider the problem of estimating the entropy of a probability 
distribution $D$ over a discrete domain $\Omega$ of size $2^m$.  
Motivated by applications in \emph{quantitative information flow
analysis (QIF)}, a rigorous approach to quantitatively measure confidentiality
\cite{BKR09,S09,VEB+16}, we seek multiplicative estimates of the (Shannon) 
entropy in the low-entropy regime \cite{BKR09,CCH11,PMTP12,S09}. Indeed, in 
QIF, one would ideally like to certify that the information leakage is 
exponentially small; even a simple password checker that reports ``incorrect 
password'' leaks such an exponentially small amount of information about the 
password.

It is immediate that mere sample access to the distribution is inadequate for
any efficient algorithm to certify that the entropy is so small: distributions
with entropies that are vastly different in the multiplicative sense may 
nevertheless have negligible statistical distance and thus be 
indistinguishable (cf.~Batu et al.~\cite{BDKR05} and, in particular, Valiant 
and Valiant~\cite{VV11} for strong lower bounds). We thus consider a \emph
{conditional sampling} oracle, as introduced by Chakraborty et al.~\cite
{CFGM16} and independently by Canonne et al.~\cite{CRS15}, with \emph
{probability revealing samples}, as introduced by Onak and Sun~\cite{OS18}. A 
conditional sampling oracle, COND, for a distribution $D$ takes a representation of a
set $S$ and returns a sample drawn from $D$ conditioned on $S$. To extend the 
oracle to have probability-revealing samples means that in addition to the 
sample $x$, we obtain the probability of $x$ in $D$. Note that this
oracle, referred to as {\proc} henceforth, can be simulated by an evaluation oracle for $D$ together with a 
conditional sampling oracle for $D$.\footnote{%
An oracle that returns the \emph{conditional probability} of $x$ in $D$
conditioned on $S$ might seem at least as natural, but it is not clear whether
such an oracle can be simulated by the usual evaluation and conditional 
sampling oracles. In any case, our algorithm can be easily adapted to this 
alternative model.}

Using probability-revealing samples (described as a combined 
sampling-and-evaluation model), Guha et al.~\cite{GMGV09} obtained a
$\mathcal{O}(\frac{m}{\epsilon^2H}\log\frac{1}{\delta})$-query algorithm for
multiplicative $(1+\epsilon)$-approximations of the entropy $H$ for 
distributions on a $2^m$-element domain, with confidence $1-\delta$, which is
optimal in this model:~\cite{BDKR05} observe that this indeed scales badly for
exponentially small $H$. In the conditional sampling model, Chakraborty et 
al.\ obtained a $\mathcal{O}(\frac{1}{\epsilon^8}m^7\log\frac{1}
{\delta})$-query algorithm, for an \emph{additive} $\epsilon$-approximation of
the entropy. Note that when the entropy is so small, such additive 
approximations would require prohibitively small values of $\epsilon$ to 
provide useful estimates. In summary, all previous algorithms either
\begin{itemize}
\item used a superpolynomial number of queries in the bit-length of the elements $m$,
\item used a number of queries scaling with $1/H$, or
\item obtained additive estimates
\end{itemize} 
thus rendering them incapable of obtaining useful estimates in the low-entropy regime.

\subsection{Our contribution}

The primary contribution of our work is the first algorithm to obtain 
$(1+\epsilon)$-multiplicative estimates using a polynomial number of queries in the bit-length $m$ and approximation parameter $\epsilon$, with no dependence on $H$. Indeed, we use only $\mathcal{O}(\frac{m}{\epsilon^2}\log\frac{1}{\delta})$ samples given access  to {\proc}. 
Moreover, we obtain explicit 
constant factors that are sufficiently small that our algorithms are useful in
practice: we have experiments demonstrating that our algorithm can obtain 
estimates for benchmarks that are far beyond the reach of the existing tools 
for computing the entropy.

Our algorithm is a simple median-of-means estimator. To obtain multiplicative
estimates, we use second-moment methods, hence we need bounds on the ratio of 
the variance of the self-information to the square of the entropy. Indeed, 
Batu et al.~\cite{BDKR05} considered an approach that is similarly based on 
the second moments, which encounters two main issues: The first issue, as 
discussed above, is that if the entropy is small, this ratio may be very 
large. The second issue is that bounding the variance and the entropy 
separately is not sufficient to obtain the linear dependence on $m$: the 
variance of the self-information may be quadratic in $m$, and this is tight, 
as also shown by Batu et al. Our main technical contribution thus lies in how 
we bound this ratio: we use tight, explicit bounds in the high-entropy regime 
that obtain a linear dependence on $m$, together with a ``win-win'' strategy 
for using the conditional samples. Namely, we observe that when we condition 
on avoiding the high-probability element that dominates the distribution, 
either we obtain a conditional distribution with high entropy, in which case
we can use the aforementioned bounds, or else -- observing that the 
self-information w.r.t.\ the original distribution is quite large for all 
such elements -- we can obtain a bound on the variance directly that is 
similarly small, and in particular has the same linear dependence on $m$. We 
remark that while Guha et al.~\cite{GMGV09} obtained estimates with the same 
linear dependence on $m$ in the high-entropy case, they acheived this by 
dropping any samples that have high self-information and applying a Chernoff 
bound to the remaining, bounded samples. Since in the low-entropy case the 
samples generally have high self-information, it is unclear how one would 
extend their technique to handle the low-entropy case as we do here.

It remains an interesting open question whether or not our algorithm is 
optimal. Chakraborty et al.\ obtained a $\Omega(\sqrt{\log m})$ lower bound
for the conditional sampling model; we are not aware of any lower bounds for
the combined, conditional probability-revealing sampling model. Also, Acharya 
et al.~\cite{ACK18} obtained a $\tilde{\mathcal{O}}(\frac{\log m}
{\epsilon^3})$-query algorithm for the related problem of 
$(1+\epsilon)$-multiplicative support size estimation in the conditional 
sampling model. Support size estimation is generally easier than entropy 
estimation given access to an evaluation oracle -- indeed, additive 
$\epsilon\cdot 2^m$-approximations are possible with only $\mathcal{O}(\frac
{1}{\epsilon^2})$ queries -- but they suffer from similar issues with 
distributions with a light ``tail'' (cf.\ Goldreich~\cite{G19}). Conceivably,
conditional sampling might enable a similarly substantial reduction in the
query complexity of entropy estimation as well. 

\subsection{On the application to Quantitative Information Flow}\label{sec:qif-intro}
As mentioned at the outset, our work is motivated by the needs of quantitative
information flow (QIF) applications. It is therefore an important question
whether the {\proc} oracle model is realistic. To this end, we demonstrate that {\proc} can indeed be efficiently implemented using the available tools in automated reasoning, and our technique can be employed in such QIF analyses.

The standard recipe for using the QIF framework is to measure the information leakage from an underlying program $\Pi$ as follows. In a simplified model, a program $\Pi$ maps a set of controllable inputs ($C$) and secret inputs ($I$) to outputs ($O$) observable to an attacker. The attacker is interested in inferring $I$ based on the output $O$. 
A diverse array of approaches have been proposed to efficiently model $\Pi$, with techniques relying on a combination of symbolic analysis~\cite{PMTP12}, static analysis~\cite{CHM07}, automata-based techniques~\cite{ABB15,AEL+18,B19}, SMT-based techniques~\cite{PM14}, and the like. For each, the core underlying technical problem is to determine the leakage of information for a given observation. We often capture this leakage using entropy-theoretic notions, such as Shannon entropy~\cite{BKR09,CCH11,PMTP12,S09} or min-entropy~\cite{BKR09,MS11,PMTP12,S09}. In this work, we focus on computing Shannon entropy. 

The information-theoretic underpinnings of QIF analyses allow an end-user to link the computed quantities with the probability of an adversary successfully guessing a secret, or the worst-case computational effort required for the adversary to infer the underlying confidential information. Consequently, QIF has been applied in diverse use-cases such as software side-channel detection~\cite{KB07}, inferring search-engine queries through auto-complete responses sizes~\cite{CWWZ10}, and measuring the tendency of Linux to leak TCP-session sequence numbers~\cite{ZQRZ18}. 

In our experiments, we focus on demonstrating that we can compute the entropy for programs modeled by Boolean formulas; nevertheless, our techniques are general and can be extended to other models such as automata-based frameworks. Let a formula $\varphi (U,V)$ capture the relationship between $U$ and $V$ such that for every valuation to $U$ there is at most one valuation to $V$ such that $\varphi$ is satisfied; one can view $U$ as the set of inputs and $V$ as the set of outputs. Let $m = |V|$ and $n = |U|$. 
Let $p$ be a probability distribution over $\{0,1\}^{V}$ such that for every assignment $\sigma$ to $V$, i.e., $\sigma: V \mapsto \{0,1\}$, we have $p_{\sigma} = \frac{|\satisfying{\varphi(V \mapsto \sigma)}|}{|\satisfying{\varphi)}\downarrow_{U}|}$, where $\satisfying{\varphi(V \mapsto \sigma)}$ denotes the set of solutions of $\varphi(V \mapsto \sigma)$ and $\satisfying{\varphi)}\downarrow_{U}$ denotes the set of solutions of $\varphi$ projected to $U$. Then, the \emph{entropy of $\varphi$} is $H(\varphi) =  \sum\limits_{\sigma  \in 2^{V}} p_{\sigma} \log \frac{1}{p_{\sigma}}$.  

Indeed, the problem of computing the entropy of a distribution sampled by a
given circuit is closely related to the {\sc EntropyDifference} problem
considered by Goldreich and Vadhan~\cite{GV99}, and shown to be SZK-complete.
We therefore do not expect to obtain polynomial-time algorithms for this problem.
The techniques that have been proposed to compute $H(\varphi)$ exactly compute $p_{\sigma}$ for each $\sigma$. Observe that computing $p_{\sigma}$ is equivalent to the problem of model counting, which seeks to compute the number of solutions of a given formula. Therefore, the exact techniques require $\mathcal{O}(2^m)$ model-counting queries~\cite{BPFP17,ESBB19,K12}; therefore, such techniques often do not scale for large values of $m$.  Accordingly, the state of the art often relies on sampling-based techniques that perform well in practice but can only provide lower or upper bounds on the entropy~\cite{KRB20,RKBB19}. As is often the case, techniques that only guarantee lower or upper bounds can output estimates that can be arbitrarily far from the ground truth. Thus, this setting is an appealing target for PAC-style, high-probability
multiplicative approximation guarantees. We remark that  K\"{o}pf and Rybalchenko~\cite{KR10} used Batu et al.'s~\cite{BDKR05} lower bounds to conclude that their scheme could not be improved without usage of structural properties of the program. In this context, our paper continues the direction alluded by  K\"{o}pf and Rybalchenko and designs the first efficient multiplicative approximation scheme by utilizing white-box access to the program.

Indeed, our algorithm obtains an estimate that is guaranteed to lie within a $(1\pm\varepsilon)$-factor of $H(\varphi)$ with confidence at least $1-\delta$. Once again, we stress that we obtain such a multiplicative estimate even when $H(\varphi)$ is very small, as in the case of a password-checker as described above. 

Sampling and counting satisfying assignments to formulas are, of course,
computationally intractable problems in the worst case. Nevertheless, systems
for solving these problems in practice have been developed, that frequently 
achieve reasonable performance in spite of their lack of running time 
guarantees~\cite{T06,SGRM18,AHT18,GSRM19,DV20}. Still, their invocation is relatively expensive; hence, the
situation is an excellent match to the property testing model, in which we
primarily count the number of such queries as the complexity measure of 
interest; we detail in Section~\ref{sec:experiments} how probability-revealing conditional sampling oracle, {\proc}, can be implemented with two calls to a model counter and one call to a sampler. 

We further observe that the knowledge of distribution $p$ defined by the underlying Boolean formula $\varphi$ allows us to reduce the number of queries to {\proc} from  $\mathcal{O}(\frac{m}{\varepsilon^2})$ to $\mathcal{O}(\frac{min(m,n)}{\varepsilon^2})$. Therefore, in contrast to the algorithms used in practice and prior work in the property testing literature, our algorithm makes only $\mathcal{O}(\frac{min(m,n)}{\varepsilon^2})$ counting and sampling queries even though the support of the distribution specified by $\varphi$ can be of size $2^m$.

 To illustrate the practical efficiency of our algorithm, we implement a prototype, {\EntropyEstimation}, that employs a state-of-the-art counter for model-counting queries, GANAK~\cite{SRSM19}, and SPUR~\cite{AHT18} for sampling queries. Our empirical analysis demonstrates that {\EntropyEstimation} is able to handle benchmarks that clearly lie beyond the reach of the exact techniques. We stress again that while we present {\EntropyEstimation} for programs modeled as a Boolean formula, our analysis applies other approaches, such as automata-based approaches, modulo access to the appropriate sampling and counting oracles.

\subsection{Organization of the rest of the paper}

The rest of the paper is organized as follows: we present the notations and  preliminaries in Section~\ref{sec:prelims}. 
Next, we present an overview of {\EntropyEstimation} including a detailed description of the algorithm and an analysis of its correctness in Section~\ref{sec:detail-overview}. We then describe our experimental methodology and discuss our results with respect to the accuracy and scalability of {\EntropyEstimation} in Section~\ref{sec:experiments}. Finally, we conclude in Section~\ref{sec:conclusion}.

\section{Preliminaries}\label{sec:prelims}
Let $\Omega$ be the universe and a probability distribution $D$ over $\Omega$ is a non-negative function $D: \Omega \mapsto [0,1]$ such that $\sum_{x \in \Omega} D(x) = 1$.
Let $D$ be a fixed distribution over $\Omega$ of size $2^m$.

Two oracles often studied in the property testing literature are conditioning, denoted by COND, and evaluation, denoted by EVAL. A conditioning oracle for a distribution $D$, COND, takes as input a set $S \subseteq \Omega$ and returns $x$ such that the probability $x$ is returned is $\frac{D(x)}{\sum_{y\in S} D(y)}$. An evaluation oracle for $D$, EVAL, respontds to a query $x \in \Omega$ with $D(x)$. 

A {\em probability-revealing conditional sampling} oracle for $D$, $\proc$, when queried with a set $S \subseteq \Omega$, returns a tuple $(x, D(x))$ such that $x \in S$ and the probability $x$ is returned is $\frac{D(x)}{\sum_{y \in S} D(y)}$. 
Note that access to the probability-revealing conditional sampling oracle, {\proc}, is indeed weaker than access to both COND and EVAL, as calling EVAL on the
$x$ returned by COND permits simulation of {\proc}, but {\proc} does not permit access to $D(x)$ for an arbitrary $x$.

\section{{\EntropyEstimation}: Efficient Estimation of $H(D)$}\label{sec:detail-overview}
In this section, we focus on the primary technical contribution of our work: an algorithm, called {\EntropyEstimation}, that returns  an  $(\varepsilon,\delta)$ estimate of $H(D)$. We first provide a detailed technical overview of the design of {\EntropyEstimation}  in Section~\ref{techover-sec}, then provide a detailed description of the algorithm, and finally, provide the accompanying technical analysis of the correctness and complexity of {\EntropyEstimation}. 

\subsection{Technical Overview}\label{techover-sec}
At a high level, {\EntropyEstimation} uses a median of means estimator, i.e., we first estimate $H(D)$ to within a $(1\pm\varepsilon)$-factor with probability at least $\frac{5}{6}$ by computing the mean of the underlying estimator and then take the median of many such estimates to boost the probability of correctness to $1-\delta$.  Recall $|\Omega|=2^m$. 

Let us consider a  random variable $Z$ over the domain $\Omega$ with distribution $D$ and consider the self-information function $g: \Omega \to [0,\infty)$, given by $g(x) = \log (\frac{1}{D(x)})$. 
Observe that the entropy $H(D) = \expect[g(Z)]$. Therefore, a simple estimator would be to sample $Z$ using our oracle and then estimate the expectation of $g(Z)$ by a sample mean. 
 In their seminal work, Batu et al.~\cite{BDKR05} observed that the variance of $g(Z)$, denoted by $\variance[g(Z)]$, can be at most $m^2$. The required number of sample queries, based on a straightforward analysis, would  be $\Theta\left(\frac{ \variance[g(Z)]}{\varepsilon^2 \cdot (\expect[g(Z)])^2}\right) = \Theta\left(\frac{\sum D(x) \log^2 \frac{1}{D(x)}}{(\sum D(x) \log \frac{1}{D(x)})^2}\right)$. However, $\expect[g(Z)] = H(D)$ can be arbitrarily close to $0$, and therefore, this does not provide a reasonable upper bound on the required number of samples. 

To address the lack of lower bound on $H(D)$, we observe that for $D$ to have $H(D) < 1$, there must exist $x_{high}$ such that $D(x_{high}) > \frac{1}{2}$. We then observe that given access to {\proc}, we can identify such a $x$ with high probability, thereby allowing us to consider the two cases separately: (A) $H(D) > 1$ and (B) $H(D) < 1$. Now, for case (A), we could use Batu et al's bound for $\variance[g(Z)]$ and obtain an estimator that would require $\Theta\left(\frac{ \variance[g(Z)]}{\varepsilon^2 \cdot (\expect[g(Z)])^2}\right)$ queries to {\proc}. It is worth remarking that the bound $\variance[g(Z)] \leq m^2$ is indeed tight as a uniform distribution over $D$ would achieve the bound. Therefore, we instead focus on the expression   $\frac{ \variance[g(Z)]}{(\expect[g(Z)])^2}$ and prove that for the case when $\expect[g(Z)]  = H(D) > h $, we can upper bound $\frac{ \variance[g(Z)]}{(\expect[g(Z)])^2}$ by $\frac{(1+o(1)) \cdot m}{h\cdot \varepsilon^2}$, thereby reducing the complexity from $m^2$ to $m$ (Observe that we have $H(D) > 1$, that is, we can take $h=1$). 

Now we return to the case (B)  wherein we have identified $x_{high}$ with $D(x_{high}) > \frac{1}{2}$. Let $r = D(x_{high})$ and $H_{rem} = \sum\limits_{y \in \Omega \setminus x_{high}} D(y) \log \frac{1}{D(y)}$. Note that $H(D) = r \log \frac{1}{r} + H_{rem}$. Therefore, we focus on estimating $H_{rem}$. To this end, we define a random variable $T$ that takes values in $\Omega \setminus \{x_{high}\}$ such that $\Pr[T = y] = \frac{D(y)}{1-r}$. Using the function $g$ defined above, we have $H_{rem} = (1-r) \cdot \expect[g(T)]$. Again, we have two cases, depending on whether $H_{rem} \geq 1$ or not; if it is, then we can bound the ratio $\frac{\variance[g(T)]}{\expect[g(T)]^2}$ similarly to case (A). If not, we observe that the denominator is
at least $1$ for $r\geq 1/2$. And, when $H_{rem}$ is so small, we can upper bound the numerator by $(1+o(1))m$, giving overall $\frac{ \variance[g(T)]}{(\expect[g(T)])^2}\leq (1+o(1)) \cdot\frac{1}{\varepsilon^2}\cdot m$. We can thus estimate $H_{rem} $ using the median of means estimator.

\subsection{Algorithm Description}

{\EntropyEstimation} takes a tolerance parameter $\varepsilon$, a confidence parameter $\delta$ as input, and returns an estimate $\hat{h}$ of the entropy $H(D)$, that is guaranteed to lie within a $(1\pm\varepsilon)$-factor of $H(D)$ with confidence at least $1-\delta$. Before presenting the technical details of {\EntropyEstimation}, we will first discuss the key subroutine {\SampleEst} in {\EntropyEstimation}. 

\begin{algorithm}[h]
	\caption{\label{algo:sampleest}$\SampleEst(\bar{S}, t, \delta)$ }
	\begin{algorithmic}[1]
		\State $\mathcal{L} \gets [\;]$
		\State $T \gets \lceil\frac{9}{2} \log\frac{2}{\delta}\rceil$ 	
		\For{$i = 1,\ldots,T$} \label{algo:sampleest:line:valueT}
		\State $est \gets 0$
		\For{$j = 1,\ldots,t$}\label{algo:sampleest:line:innerloop}
		\State $ (y,r) \gets \proc (\Omega\setminus\bar{S} )$ \label{algo:sampleest:line:proc}
		\State $est \gets est +  \log (1/r)$ \label{algo:sampleest:line:updateest}
		\EndFor 
		\State $\mathcal{L}.\mathrm{Append}(\frac{est}{t})$ \label{algo:sampleest:line:appendC}
		\EndFor 
		\State \Return $\mathrm{Median}(\mathcal{L})$~\label{algo:sampleest:line:returnC}
	\end{algorithmic}
\end{algorithm}

Algorithm~\ref{algo:sampleest} presents the subroutine {\SampleEst}, which  takes as input an element $x$; the number of required samples, $t$; and a confidence parameter $\delta$, and returns a median-of-means estimate of $H_{rem}$. Algorithm~\ref{algo:sampleest} starts off by computing the value of $T$, the required number of repetitions to ensure at least $1-\delta$ confidence for the estimate. The algorithm has two loops--- one outer loop (Lines~\ref{algo:sampleest:line:valueT}-\ref{algo:sampleest:line:appendC}), and one inner loop (Lines~\ref{algo:sampleest:line:innerloop}-\ref{algo:sampleest:line:updateest}).  The outer loop runs for $\lceil\frac{9}{2} \log(\frac{2}{\delta})\rceil$ rounds, where in each round, Algorithm~\ref{algo:sampleest} updates a list $\mathcal{L}$ with the mean estimate, \emph{est}. In the inner loop, in each round, Algorithm~\ref{algo:sampleest} updates the value of \emph{est}: Line~\ref{algo:sampleest:line:proc} invokes ${\proc}$ to draw sample from $D$ conditioned on the set $\Omega \setminus \{x\}$. 
At line~\ref{algo:sampleest:line:updateest}, \emph{est} is updated with $\log(\frac{1}{r})$, and at line~\ref{algo:sampleest:line:appendC}, the final \emph{est} is added to $\mathcal{L}$. Finally, at line~\ref{algo:sampleest:line:returnC}, Algorithm~\ref{algo:sampleest} returns the median of $\mathcal{L}$.

We now return to {\EntropyEstimation}; 
Algorithm~\ref{algo:entropyest} presents the proposed algorithmic framework {\EntropyEstimation}. 

\begin{algorithm}[htb]
	\caption{\label{algo:entropyest}$\EntropyEstimation(\varepsilon,\delta)$ }
	\begin{algorithmic}[1]	
		\State $m \gets \log |\Omega|$ %
		\For{$i = 1,\ldots,\lceil\log (10/\delta)\rceil$} \label{algo:entropyset:line:loopstart}
		
		\State $(x,r) \gets \proc (\Omega)$ 	\label{algo:entropyset:line:sample}
		\If{$r >  \frac{1}{2}$ }\label{algo:entropyset:line:checkr}
		\State $t \gets \frac{6}{\epsilon^2} \cdot (m+\log(m+\log m+2.5))$ \label{algo:entropyset:line:compute-t-inr}
		\State $\hat{h}_{rem} \gets \SampleEst(\{x\},t, 0.9\cdot\delta)$ \label{algo:entropyset:line:callsamplees-inr}
		\State $\hat{h} \gets (1-r) \hat{h}_{rem} + r \log (\frac{1}{r})$ \label{algo:entropyset:line:hath-inr}
		\State \Return $\hat{h}$ \label{algo:entropyset:line:return-inr}
		\EndIf
		\EndFor
		\State $t \gets \frac{6}{\epsilon^2} \cdot (n-1)$\label{algo:entropyset:line:compute-t}	
		\State $\hat{h} \gets \SampleEst(\varnothing,t,0.9\cdot\delta)$\label{algo:entropyset:line:callsamplees}
		\State \Return $\hat{h}$
	\end{algorithmic}
\end{algorithm}

 Algorithm~\ref{algo:entropyest} attempts to determine whether there exists $(x,D(x))$ such that $D(x) > 1/2$ or not by iterating over lines~\ref{algo:entropyset:line:loopstart}-\ref{algo:entropyset:line:return-inr} for $\lceil\log(10/\delta)\rceil$ rounds. Line~\ref{algo:entropyset:line:sample} draws a sample $(x,r=D(x))$. %
Line~\ref{algo:entropyset:line:checkr}  chooses one of the two paths based on the value of $r$:

\begin{enumerate}
	\item If the value of $r$  turns out to be greater than $1/2$, the value of required number of samples, $t$, is calculated as per the calculation shown at line~\ref{algo:entropyset:line:compute-t-inr}. At line~\ref{algo:entropyset:line:callsamplees-inr}, the subroutine {\SampleEst} is called to estimate $\hat{h}_{rem}$. Finally, it computes the estimate $\hat{h}$ at line~\ref{algo:entropyset:line:hath-inr}.
	
	\item If the value of $r$ is at most $1/2$ in every round, the number of samples we use, $t$, is calculated as per the calculation shown at line~\ref{algo:entropyset:line:compute-t}. At line~\ref{algo:entropyset:line:callsamplees}, the subroutine {\SampleEst} is called with appropriate arguments to compute the estimate $\hat{h}$.  
\end{enumerate}

\subsection{Theoretical Analysis}

\begin{theorem}\label{main-thm}
	Given access to {\proc} for a distribution $D$ with $m=\log |\Omega|\geq 2$, a tolerance parameter $\varepsilon > 0$, and confidence parameter $\delta > 0$, the algorithm {\EntropyEstimation} returns $\hat{h}$ such that 
	\begin{align*}
	\Pr\left[ (1-\varepsilon)H(D) \leq \hat{h} \leq (1+\varepsilon) H(D) \right] \geq 1-\delta
	\end{align*}
\end{theorem}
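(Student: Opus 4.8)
The plan is to mirror the two branches of Algorithm~\ref{algo:entropyest}, peeling off the dominant element when it exists, and to reduce the whole proof to (i) the correctness of the median-of-means routine {\SampleEst} and (ii) bounds on the ratio $\variance[g(\cdot)]/\expect[g(\cdot)]^2$ of the self-information $g(y)=\log\tfrac1{D(y)}$. First I would analyze the loop on lines~\ref{algo:entropyset:line:loopstart}--\ref{algo:entropyset:line:return-inr}. There is at most one element $x_{high}$ with $D(x_{high})>\tfrac12$, and if one exists each call $\proc(\Omega)$ returns it with probability $>\tfrac12$, so the probability that none of the $\lceil\log(10/\delta)\rceil$ independent draws equals $x_{high}$ is at most $2^{-\lceil\log(10/\delta)\rceil}\le\delta/10$; call this event $E$. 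Off $E$, either (a) the test on line~\ref{algo:entropyset:line:checkr} succeeds with a genuine $x_{high}$ in hand, or (b) the algorithm reaches line~\ref{algo:entropyset:line:compute-t} and in fact $\max_x D(x)\le\tfrac12$, whence $H(D)=\sum_x D(x)\log\tfrac1{D(x)}\ge\sum_x D(x)=1$.

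Next I would record the {\SampleEst} guarantee as a lemma: for $Y$ distributed as $D$ conditioned on $\Omega\setminus\bar S$, if $\variance[g(Y)]\le\varepsilon^2\expect[g(Y)]^2\,t/6$ then $\SampleEst(\bar S,t,\delta')$ returns a value within a factor $(1\pm\varepsilon)$ of $\mu:=\expect[g(Y)]$ with probability at least $1-\delta'$. This is the standard argument: line~\ref{algo:sampleest:line:updateest} contributes a fresh sample of $g(Y)$ (using that $\proc(\Omega\setminus\bar S)$ returns $(y,D(y))$ with $y$ from the conditional law, so $\log(1/r)=g(y)$), so each entry of $\mathcal L$ is an average of $t$ i.i.d.\ copies of $g(Y)$ with variance $\le\varepsilon^2\mu^2/6$, hence within $(1\pm\varepsilon)\mu$ with probability $\ge\tfrac56$ by Chebyshev; a tail bound on the number of bad entries over the $\lceil\tfrac92\log\tfrac2{\delta'}\rceil$ repetitions then pushes the median inside $(1\pm\varepsilon)\mu$ with probability $\ge1-\delta'$.

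The technical heart is to verify that the $t$ chosen on lines~\ref{algo:entropyset:line:compute-t-inr} and~\ref{algo:entropyset:line:compute-t} satisfies the hypothesis of that lemma, i.e.\ that $\variance[g(Y)]/\mu^2$ is at most $m+\log(m+\log m+2.5)$ in branch (a) and $n-1$ in branch (b) (with $2^n$ an upper bound on the support of $D$, e.g.\ $n=m$). The engine is an explicit variance bound that is linear in the dimension, replacing Batu et al.'s $\variance[g]\le m^2$. In branch (b), where off $E$ we additionally know $\max_x D(x)\le\tfrac12$ (hence $H(D)\ge1$) and $\mu=H(D)$, a bound of the form $\variance[g(Z)]\le(n-1)H(D)^2$ does the job. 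In branch (a), write $r=D(x_{high})$ and $H_{rem}=\sum_{y\ne x_{high}}D(y)\log\tfrac1{D(y)}$, and let $T$ be distributed as $D$ conditioned on $\Omega\setminus\{x_{high}\}$, so $\mu=\expect[g(T)]=H_{rem}/(1-r)$, $H(D)=r\log\tfrac1r+H_{rem}$, and $\variance[g(T)]$ equals the varentropy of $T$'s law (the two differ only by the additive constant $\log\tfrac1{1-r}$). I would then split on $H_{rem}$ to bound $\variance[g(T)]/\mu^2$ by $m+\log(m+\log m+2.5)$: if $H_{rem}\ge1$ then $\mu\ge H_{rem}\ge1$ and $T$'s law has entropy $>1$, so an entropy-at-least-one variance bound (of the form varentropy $\le(1+o(1))\cdot m\cdot(\text{entropy})^2$) suffices; if $H_{rem}<1$, then $D(y)\le1-r<\tfrac12$ forces $\log\tfrac1{D(y)}>1$, so $\mu\ge\log\tfrac1{1-r}>1$, and a direct estimate against the $\le2^m$ support bound again gives the claimed ratio — the lower-order $\log(m+\log m+2.5)$ term is precisely what absorbs the $o(1)$-corrections uniformly for $m\ge2$. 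Establishing these variance inequalities with their stated explicit constants is the main obstacle; the rest is routine.

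Finally I would assemble the error probability. In branch (a), by the previous two steps $\SampleEst(\{x_{high}\},t,0.9\delta)$ returns $\hat h_{rem}$ with $|\hat h_{rem}-\expect[g(T)]|\le\varepsilon\expect[g(T)]$ except with probability $\le0.9\delta$, and then $\hat h=(1-r)\hat h_{rem}+r\log\tfrac1r$ (line~\ref{algo:entropyset:line:hath-inr}) satisfies $|\hat h-H(D)|=(1-r)|\hat h_{rem}-\expect[g(T)]|\le\varepsilon(1-r)\expect[g(T)]=\varepsilon H_{rem}\le\varepsilon H(D)$, since $r\log\tfrac1r\ge0$ contributes no error (the degenerate case $r=1$ gives $H(D)=0$ and $\hat h=0$). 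In branch (b), off $E$ we have $H(D)\ge1$ and $\SampleEst(\varnothing,t,0.9\delta)$ returns $\hat h$ with $|\hat h-H(D)|\le\varepsilon H(D)$ except with probability $\le0.9\delta$. Since each run makes exactly one {\SampleEst} call, the probability it returns a bad estimate is at most $0.9\delta$; combined with $\Pr[E]\le\delta/10$, a union bound gives total failure probability at most $0.9\delta+\delta/10=\delta$, which is Theorem~\ref{main-thm}.
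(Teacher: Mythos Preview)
Your proposal is correct and matches the paper's approach: the same two-branch structure driven by whether a heavy element $x_{high}$ is found, the same median-of-means lemma for {\SampleEst} (the paper's Lemma~\ref{sampleest-lem}), and the same reliance on a linear-in-$m$ bound for the relative variance of the self-information, which is exactly the content of the paper's Lemma~\ref{y-bound-lem}. The only cosmetic difference is that in branch (a) the paper's actual proof case-splits on whether the \emph{conditional} entropy $H$ of $T$ exceeds $1$ (rather than on $H_{rem}$, as in your sketch and the paper's own overview), since this aligns directly with the two hypotheses of Lemma~\ref{y-bound-lem}; your split also works---indeed $H_{rem}\ge 1$ together with $r>1/2$ forces $H\ge 1/(1-r)-\log\tfrac{1}{1-r}>1$---but be aware that under $H_{rem}<1$ one may still have $H>1$, so your ``direct estimate'' there will need to use $\mu>H$ and $\mu>\log\tfrac{1}{1-r}>1$ to cover both regimes of Lemma~\ref{y-bound-lem} at once.
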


We first analyze the median-of-means estimator computed by {\SampleEst}.

\begin{lemma}\label{sampleest-lem}
	Given a set $\bar{S}$, access to {\proc} for a distribution $D$,
an accuracy parameter $\varepsilon > 0$,
a confidence parameter $\delta > 0$, 
and a batch size $t\in\mathbb{N}$ for which
\[
\frac{1}{t\epsilon^2}\cdot 
\left(
\frac{\sum_{y\in \Omega\setminus\bar{S}} D(y|\Omega\setminus\bar{S})
(\log\frac{1}{D(y)})^2}
{\left(\sum_{y\in \Omega\setminus\bar{S}} D(y|\Omega\setminus\bar{S})
\log\frac{1}{D(y)}\right)^2 }
-1
\right)
\leq 1/6 
\]
the algorithm {\SampleEst} returns an estimate $\hat{h}$ such that with
probability $1-\delta$, 
\begin{align*}
\hat{h}&\leq (1+\epsilon)\sum_{y\in\Omega\setminus\bar{S}}D(y|\Omega\setminus\bar{S})\log\frac{1}{D(y)} \text{ and}\\
\hat{h}&\geq (1-\epsilon)\sum_{y\in\Omega\setminus\bar{S}}D(y|\Omega\setminus\bar{S})\log\frac{1}{D(y)}.
\end{align*}
\end{lemma}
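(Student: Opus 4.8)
The plan is to read {\SampleEst} as a textbook two-level ``median-of-means'' estimator and to verify that the hypothesis on the batch size $t$ is precisely what the Chebyshev step needs. First I would identify the random variable being averaged: a call $\proc(\Omega\setminus\bar S)$ returns $(y,r)$ with $y$ drawn from the conditional law $D(\cdot\mid\Omega\setminus\bar S)$ but $r=D(y)$ equal to the \emph{unconditional} mass, so $\log(1/r)=g(y)$ where $g(y):=\log\frac1{D(y)}$. Consequently $\expect[g(y)]=\sum_{y\in\Omega\setminus\bar S}D(y\mid\Omega\setminus\bar S)\log\frac1{D(y)}=:\mu$ is exactly the target quantity of the lemma, and $\expect[g(y)^2]=\sum_{y\in\Omega\setminus\bar S}D(y\mid\Omega\setminus\bar S)(\log\frac1{D(y)})^2$; since $\proc$ only ever returns $y$ with $D(y)>0$, $g(y)\in[0,\infty)$ is finite a.s.\ (and the degenerate case $\mu=0$ makes the estimator identically $0$, so I would set it aside). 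The value $\mathit{est}/t$ appended to $\mathcal L$ on Line~\ref{algo:sampleest:line:appendC} is then the mean of $t$ i.i.d.\ copies of $g(y)$, hence has expectation $\mu$ and variance $\variance[g(y)]/t$ with $\variance[g(y)]=\expect[g(y)^2]-\mu^2$.

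Next I would apply Chebyshev to a single batch mean $A$: $\Pr[|A-\mu|\ge\varepsilon\mu]\le \variance[A]/(\varepsilon\mu)^2=\tfrac1{t\varepsilon^2}\big(\expect[g(y)^2]/\mu^2-1\big)\le \tfrac16$, where the last inequality is exactly the stated hypothesis. Thus each entry of $\mathcal L$ lies in $[(1-\varepsilon)\mu,(1+\varepsilon)\mu]$ with probability at least $5/6$. Finally I would run the usual median amplification over the $T=\lceil\tfrac92\log\tfrac2\delta\rceil$ independent entries produced by the outer loop: $\mathrm{Median}(\mathcal L)$ leaves $[(1-\varepsilon)\mu,(1+\varepsilon)\mu]$ only if at least $T/2$ of the entries do, and the number that do is a sum of $T$ i.i.d.\ Bernoulli variables of mean at most $1/6$, so by Hoeffding's inequality this event has probability at most $\exp\!\big(-2T(\tfrac12-\tfrac16)^2\big)=\exp(-2T/9)\le\exp(-\log\tfrac2\delta)=\delta/2\le\delta$. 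This yields both displayed bounds on $\hat h$ simultaneously with probability $1-\delta$.

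I do not expect a genuine obstacle: this is the ``easy'' half of the analysis, with all the difficulty of bounding $\variance[g(y)]/\mu^2$ deferred to the proof of Theorem~\ref{main-thm}. The one point demanding care is the bookkeeping noted above --- the oracle reveals $D(y)$, not $D(y\mid\Omega\setminus\bar S)$, so the estimator targets $\sum_{y}D(y\mid\Omega\setminus\bar S)\log\frac1{D(y)}$ rather than the entropy of the conditional distribution, which is precisely what lets the caller reassemble $H(D)=r\log\frac1r+(1-r)H_{rem}$ afterward. The remaining check is that the algorithm's constants align: the $1/6$ in the hypothesis is what turns Chebyshev into a $5/6$ per-batch success, and the $9/2$ in the definition of $T$ is what makes Hoeffding push the median's failure probability below $\delta$; both follow immediately once the two inequalities are written out.
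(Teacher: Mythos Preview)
Your proposal is correct and follows essentially the same median-of-means argument as the paper: Chebyshev on each batch mean (using the hypothesis on $t$ to get per-batch success $\geq 5/6$), followed by Hoeffding to amplify via the median with $T=\lceil\tfrac{9}{2}\log\tfrac{2}{\delta}\rceil$. The only cosmetic difference is that the paper splits the bad event into ``too low'' and ``too high'' indicators $L_i,H_i$, applies Hoeffding to each, and union-bounds to get failure probability $\delta/2+\delta/2=\delta$, whereas you use a single ``outside the interval'' indicator and obtain $\delta/2\le\delta$ directly; both are valid and yield the same constants.
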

\begin{proof}
Let $R_{ij}$ be the random value taken by $r$ in the $i$th iteration of the
outer loop and $j$th iteration of the inner loop. We observe that 
$\{R_{ij}\}_{(i,j)}$ are a family of i.i.d.\ random variables.
Let $C_i=\sum_{j=1}^t\frac{1}{t}\log\frac{1}{R_{ij}}$ be the value appended to 
$C$ at the end of the $i$th iteration of the loop. 
Clearly $\expect{[C_i]}=\expect{[\log\frac{1}{R_{ij}}]}$.
Furthermore, we observe that by independence of the $R_{ij}$, 
\[
\variance{[C_i]} = \frac{1}{t}\variance{[\log\frac{1}{R_{ij}}]} = 
\frac{1}{t}(\expect{[(\log R_{ij})^2]}-\expect{[\log\frac{1}{R_{ij}}]}^2).
\]
By Chebyshev's inequality, now,
\begin{align*}
\Pr\left[|C_i-\expect{[\log\frac{1}{R_{ij}}]}|>
\epsilon\expect{[\log\frac{1}{R_{ij}}]}\right] &<
\frac{\variance{[C_i]}}{\epsilon^2\expect{[\log\frac{1}{R_{ij}}]}^2}\\
&=\frac{\expect{[(\log R_{ij})^2]}-\expect{[\log\frac{1}{R_{ij}}]}^2}
{t\cdot \epsilon^2\expect{[\log\frac{1}{R_{ij}}]}^2}\\
&\leq 1/6 
\end{align*}
by our assumption on $t$.

Let $L_i\in \{0,1\}$ be the indicator random variable for the event that 
$C_i<\expect{[\log\frac{1}{R_{ij}}]}-\epsilon\expect{[\log\frac{1}{R_{ij}}]}$,
and let $H_i\in \{0,1\}$ be the indicator random variable for the event that 
$C_i>\expect{[\log\frac{1}{R_{ij}}]}+\epsilon\expect{[\log\frac{1}{R_{ij}}]}$.
Similarly, since these are disjoint events, $B_i=L_i+H_i$ is also an indicator
random variable for the union. So long as $\sum_{i=1}^T L_i < T/2$ and
$\sum_{i=1}^T H_i < T/2$, we note that the value returned by {\SampleEst} is as
desired. By the above calculation, $\Pr[L_i=1]+\Pr[H_i=1]=\Pr[B_i=1]<1/6$, and 
we note that $\{(B_i,L_i,H_i)\}_i$ are a family of i.i.d.\ random variables. 
Observe that by Hoeffding's inequality,
\[
\Pr\left[\sum_{i=1}^T L_i \geq \frac{T}{6}+\frac{T}{3}\right]\leq \exp(-2T\frac{1}{9})=\frac{\delta}{2}
\]
and similarly $\Pr\left[\sum_{i=1}^T H_i \geq \frac{T}{2}\right]\leq\frac
{\delta}{2}$. Therefore, by a union bound, the returned value is adequate
with probability at least $1-\delta$ overall.
\end{proof}

The analysis of {\SampleEst} relied on a bound on the ratio of the first and
second moments of the self-information in our truncated distribution.
Suppose for all $x\in\Omega$, $D(x)\leq 1/2$. We observe that
then $H(D)\geq\sum_{y\in\Omega}D(y)\cdot 1=1$. In this case, we can
bound the ratio of the second moment to the square of the entropy as follows.

\begin{lemma}\label{y-bound-lem}
Let $D:\Omega\to [0,1]$ be given with 
$\sum_{y\in\Omega}D(y)\leq 1$ and 
\[
H=\sum_{y\in\Omega}D(y)\log \frac{1}{D(y)}\geq 1.
\]
Then
\[
\frac{\sum_{y\in\Omega}D(y)(\log D(y))^2}
{\left(\sum_{y\in\Omega}D(y)\log \frac{1}{D(y)}\right)^2}\leq \left(1+\frac{\log(m+\log m+1.1)}{m}\right)m.
\]
Similarly, if $H\leq 1$ and $m\geq 2$,
\[
\sum_{y\in\Omega}D(y)(\log D(y))^2\leq m+\log(m+\log m+2.5).
\]
\end{lemma}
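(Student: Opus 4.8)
The plan is to reduce both inequalities to a single transcendental estimate by passing to an information-weighted ``tilted'' copy of $D$. If $H=0$ then every $D(y)$ is $0$ or $1$ and both left-hand sides vanish, so assume $H>0$; since $\sum_y D(y)\le 1$ we have $D(y)\le 1$, hence $g_y:=\log\frac1{D(y)}\ge 0$ for every $y$. Define the probability distribution $q$ on $\Omega$ by $q_y:=\frac{D(y)\,g_y}{H}$: it is nonnegative and sums to $1$ by the definition of $H$, and being supported inside $\Omega$ it has Shannon entropy $H(q)\le\log|\Omega|=m$. The starting point is the identity (using $(\log D(y))^2=g_y^2$)
\[
\sum_{y\in\Omega}D(y)(\log D(y))^2=\sum_{y}D(y)\,g_y^2=H\cdot\expect_{y\sim q}[g_y],
\]
so, writing $E:=\expect_{y\sim q}[g_y]$, the first claim asks us to bound $\tfrac{HE}{H^2}=\tfrac EH$ and the second to bound $HE$.

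Next I would extract an inequality in $E$ alone. From $q_y=\frac{g_y}{H}\,2^{-g_y}$ we get, whenever $q_y>0$, the pointwise identity $g_y=\log\frac1H+\log\frac1{q_y}+\log g_y$; taking the $q$-expectation and applying Jensen's inequality to the concave map $\log$,
\[
E=\log\tfrac1H+H(q)+\expect_{y\sim q}[\log g_y]\ \le\ \log\tfrac1H+m+\log E .
\]
Hence $E-\log E\le m+\log\frac1H=:c$. Since $x\mapsto x-\log x$ is strictly increasing on $(\tfrac1{\ln 2},\infty)$ and $c\ge 1$ exceeds its global minimum, the larger root $E_0=E_0(c)$ of $x-\log x=c$ exists and $E\le E_0(c)$.

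For the first claim, $H\ge 1$ gives $c\le m$, so $\tfrac EH\le E\le E_0(m)$, and it remains to verify $E_0(m)\le F:=m+\log(m+\log m+1.1)$. By monotonicity of $x-\log x$ this is equivalent to $F-\log F\ge m$, which a short computation rewrites as $(2^{1.1}-1)\,m\ge \log m+1.1$; this holds for all $m\ge 1$ (it is tightest near $m=1$, where it reduces to $2^{1.1}>2.1$). For the second claim, $H\le 1$ gives $HE\le H\cdot E_0\!\left(m+\log\tfrac1H\right)$; substituting $H=2^{-u}$ with $u\ge 0$ and differentiating $E_0$ implicitly shows $\frac{d}{du}\bigl(2^{-u}E_0(m+u)\bigr)$ has the sign of $2-E_0(m+u)\ln 2$, which is negative whenever $E_0(m+u)>2/\ln 2$, in particular whenever $m\ge 2$ (since then $E_0(m+u)\ge E_0(2)=4$). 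So the product is maximized at $u=0$, giving $HE\le E_0(m)\le F\le m+\log(m+\log m+2.5)$.

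The only real obstacle is the explicit-constant arithmetic: checking $F-\log F\ge m$ (equivalently $(2^{1.1}-1)m\ge\log m+1.1$) and the sign of $2-E_0(m+u)\ln 2$ for $m\ge 2$, together with the minor degenerate cases ($H=0$, existence of the root, $m\ge 2$ needed only in the second part). These steps are elementary but must be handled carefully because the constants are essentially sharp: the extremal instance is $D\equiv 2^{-E_0(m)}$ on all $2^m$ points, a strictly sub-normalized measure with $H=1$ and $\sum_y D(y)(\log D(y))^2=E_0(m)$, which explains why the hypothesis is $\sum_y D(y)\le 1$ rather than an equality.
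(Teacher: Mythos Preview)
Your argument is correct and reaches the same bound as the paper, but by a genuinely different route. The paper proceeds by Lagrange multipliers: fixing the entropy $H$, it shows that $\sum_y D(y)(\log D(y))^2$ is maximized when all nonzero $D(y)$ are equal, solves the resulting transcendental constraint $|\supp|\cdot p\log\frac1p=H$ for $p$ via a Brouwer fixed-point argument that pins down an auxiliary constant $\rho$, and then evaluates. Your approach sidesteps the optimization entirely: the tilted law $q_y=D(y)g_y/H$ together with Jensen and $H(q)\le m$ yields the functional inequality $E-\log E\le m+\log\tfrac1H$ directly, and everything reduces to locating the larger root $E_0(c)$ of $x-\log x=c$. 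The two routes meet at the same equation --- writing the paper's uniform value as $p=2^{-x}$ turns $|\supp|\,p\log\frac1p=H$ into $x-\log x=\log(|\Omega|/H)$, which is exactly your $E-\log E=c$ --- but your derivation is shorter, avoids the somewhat awkward fixed-point bookkeeping for $\rho$, and makes transparent why the extremizer is the sub-normalized uniform $D\equiv 2^{-E_0(m)}$ with $H=1$.

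Two very minor remarks. First, your parenthetical ``$c\ge 1$'' need not hold in the first part (e.g.\ $m=2$, $H=2$ gives $c=1$ but larger $H$ would push $c$ below $1$); however you never actually use it --- the only fact needed is $E-\log E\le m$ and hence $E\le E_0(m)$, which follows since $x-\log x$ is increasing for $x>1/\ln 2$ and any $E\le 1/\ln 2$ is already below $E_0(m)$. Second, your monotonicity argument for $2^{-u}E_0(m+u)$ handles $u\in[0,\log\frac1H)$ with $H>0$; the endpoint $H=0$ you already dispatched separately, so nothing is missing.
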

Concretely, both cases give a bound that is at most $2m$ for $m\geq 3$;
$m=8$ gives a bound that is less than $1.5\times m$ in both cases, $m=64$ 
gives a bound that is less than $1.1\times m$, etc.
\begin{proof}
By induction on the size of the support $\supp$ of $D$, we'll show that 
when $H\geq 1$,
the ratio is at most $\log |\supp|+\log (\log|\supp|+\log\log |\supp|+1.1)$. 
Recall that we assume $|\Omega|=2^m$. The
base case is when there are only two elements ($m=1$), in which case both must
have $D(x)=1/2$, and the ratio is uniquely determined to be $1$. For the 
induction step, observe that whenever any subset of $\Omega$ takes value $0$
under $D$, this is equivalent to a distribution with smaller support, for 
which by induction hypothesis, we find the ratio is at most 
\begin{align*}
\log(|\supp|-1)+\log(\log(|\supp|-1)+\log\log(|\supp|-1)+1.1)\ &\\
<\log |\supp|+\log(\log |\supp|+\log\log |\supp|+1.1).&
\end{align*}
Consider any value of $H(D)=H$.
With the entropy fixed, we need only maximize the numerator of the ratio
Indeed, we've already ruled out a ratio of 
$|\supp|$ for solutions in which any of the $D(y)=0$ for $y\in\supp$, and
clearly we cannot have any $D(y) =1$, so we only need to consider interior 
points that are local optima. We use the method of Lagrange multipliers: for
some $\lambda$, all $D(y)$ must satisfy 
$\log^2D(y)+2\log D(y)-\lambda (\log D(y)-1)=0$, which has solutions
\[
\log D(y) = \frac{\lambda}{2}-1\pm\sqrt{(1-\frac{\lambda}{2})^2-\lambda}
=\frac{\lambda}{2}-1\pm\sqrt{1+\lambda^2/4}.
\]
We note that the second derivatives with respect to $D(y)$ are equal to
\[
\frac{2\log D(y)}{D(y)}+\frac{2-\lambda}{D(y)}
\]
which are negative iff $\log D(y) < \frac{\lambda}{2}-1$, hence we attain
local maxima only for the solution $\log D(y)=\frac{\lambda}{2}-1-\sqrt{1+\lambda^2/4}$. In other words, there is a single $D(y)$, which by the
entropy constraint, must satisfy $|\supp| D(y)\log\frac{1}{D(y)}=H$
which we'll show gives
\[
D(y) = \frac{H}{|\supp|(\log\frac{|\supp|}{H}+\log\log\frac{|\supp|}{H}+\rho)}
\]
for some $\rho\leq 1.1$. For $|\supp|=3$, we know $1\leq H\leq \log 3$, and we can
verify numerically that $\log\left(\frac{\log\frac{3}{H}+\log\log\frac{3}{H}+\rho}{\log\frac{3}{H}}\right)\in (0.42,0.72)$ for $\rho\in [0,1]$. Hence, by
Brouwer's fixed point theorem, such a choice of $\rho\in [0,1]$ exists. For 
$|\supp|\geq 4$, observe that $\frac{|\supp|}{H}\geq 2$, so $\log\left(\frac
{\log\frac{|\supp|}{H}+\log\log\frac{|\supp|}{H}}{\log\frac{|\supp|}{H}}\right)>0$. For $|\supp|=4$, $\log\left(\frac{\log\frac{4}{H}+\log\log\frac{4}{H}+\rho}{\log\frac{4}{H}}\right)\in [0,1]$, and similarly for all integer values of
$|\supp|$ up to 15, $\log\left(\frac{\log\frac{|\supp|}{H}+\log\log\frac
{|\supp|}{H}+1.1}{\log\frac{|\supp|}{H}}\right)<1.1$, so we can obtain
$\rho\in (0,1.1)$. Finally, for $|\supp|\geq 16$, we have 
$\frac{|\supp|}{H}\leq 2^{|\supp|/2H}$, and hence 
$\frac{\log\log\frac{|\supp|}{H}+\rho}{\log\frac{|\supp|}{H}}\leq 1$, so
\begin{align*}
|\supp|\frac{H(\log\frac{|\supp|}{H}+\log(\log\frac{|\supp|}{H}+\log\log\frac{|\supp|}{H}+\rho))}{|\supp|(\log\frac{|\supp|}{H}+\log\log\frac{|\supp|}{H}+\rho)}&\\
\leq H\frac{\log\frac{|\supp|}{H}+\log\log\frac{|\supp|}{H}+1}{\log\frac{|\supp|}{H}+\log\log\frac{|\supp|}{H}+\rho}&
\end{align*}
Hence it is clear that this gives $H$ for some $\rho\leq 1$.
Observe that for such a choice of $\rho$, using the substitution above, the 
ratio we attain is
\begin{align*}
\frac{|\supp|\cdot H}{H^2\cdot |\supp|(\log\frac{|\supp|}{H}+\log\log\frac{|\supp|}{H}+\rho)}\left(\log\frac{|\supp|(\log\frac{|\supp|}{H}+\log\log\frac{|\supp|}{H}+\rho)}{H}\right)^2&\\
=\frac{1}{H}(\log\frac{|\supp|}{H}+\log(\log\frac{|\supp|}{H}+\log\log\frac{|\supp|}{H}+\rho))
\end{align*}
which is monotone in $1/H$, so using the fact that $H\geq 1$, we find it is at
most
\[
\log |\supp|+\log(\log |\supp|+\log\log |\supp|+\rho)
\]
which, recalling $\rho < 1.1$, gives the claimed bound.

For the second part, observe that by the same considerations, when $H$ is fixed,
\[
\sum_{y\in\Omega}D(y)(\log D(y))^2=H\log\frac{1}{D(y)}
\]
for the unique choice of $D(y)$ for $m$ and $H$ as above, i.e., we will
show that for $m\geq 2$, it is
\[
H\left(\log\frac{|\Omega|}{H}+\log(\log\frac{|\Omega|}{H}+\log\log\frac{|\Omega|}{H}+\rho)\right)
\]
for some $\rho \in (0,2.5)$.
Indeed, we again consider the function
\[
f(\rho ) = \frac{\log(\log\frac{|\Omega|}{H}+\log\log\frac{|\Omega|}{H}+\rho)}{\log\log\frac{|\Omega|}{H}},
\]
and observe that for $|\Omega|/H > 2$, $f(0)>0$. Now, when $m\geq 2$ and
$H\leq 1$, $|\Omega|/H\geq 4$. We will see that the function 
$d(\rho )=f(\rho )-\rho$ has no critical points for $|\Omega|/H\geq 4$ and 
$\rho>0$, and hence its maximum is attained at the boundary, i.e., at 
$\frac{|\Omega|}{H}=4$, at which point we see that $f(2.5)<2.5$. So, for such
values of $\frac{|\Omega|}{H}$, $f(\rho)$ maps $[0,2.5]$ into $[0,2.5]$ and hence
by Brouwer's fixed point theorem again, for all $m\geq 4$ and $H\geq 1$
some $\rho\in (0,2.5)$ exists for which
$D(y) = \log\frac{|\Omega|}{H}+\log(\log\frac{|\Omega|}{H}+\log\log\frac{|\Omega|}{H}+\rho)$ gives $\sum_{y\in\Omega}D(y)\log\frac{1}{D(y)}=H$.

Indeed, $d'(\rho )=\frac{1}{\ln 2(\log\frac{|\Omega|}
{H}+\log\log\frac{|\Omega|}{H}+\rho)\log\log\frac{|\Omega|}{H}}-1$, which has a singularity
at $\rho = -\log\log\frac{|\Omega|}{H}-\log\log\frac{|\Omega|}{H}$, and
otherwise has a critical point at $\rho=\frac{\ln 2}{\log\log\frac{|\Omega|}{H}}-
\log\frac{|\Omega|}{H}-\log\log\frac{|\Omega|}{H}$. Since $\log\frac{|\Omega|}{H}
\geq 2$ and $\log\log\frac{|\Omega|}{H}\geq 1$ here, these are both clearly
negative. 

Now, we'll show that this expression (for $m\geq 2$) is maximized when $H=1$.
Observe first that the expression $H(m+\log\frac{1}{H})$ as a function of $H$
does not have critical points for $H\leq 1$: the derivative is $m+\log\frac{1}{H}-\frac{1}{\ln 2}$, so critical points require $H=2^{m-(1/\ln 2)}>1$.
Hence we see that this expression is maximized at the boundary, when $H=1$.
Similarly, the rest of the expression,
\[
H\log(m+\log\frac{1}{H}+\log(m+\log\frac{1}{H})+2.5)
\]
viewed as a function of $H$, only has critical points for
\[
\log(m+\log\frac{1}{H}+\log(m+\log\frac{1}{H})+2.5)=
\frac{\frac{1}{\ln 2}(1+\frac{1}{m+\log\frac{1}{H}})}{m+\log\frac{1}{H}+\log(m+\log\frac{1}{H})+2.5}
\]
i.e., it requires
\begin{align*}
(m+\log\frac{1}{H}+\log(m+\log\frac{1}{H})+2.5)\log(m+\log\frac{1}{H}+\log(m+\log\frac{1}{H})+2.5)&\\
=\frac{1}{\ln 2}(1+\frac{1}{m+\log\frac{1}{H}}).&
\end{align*}
But, the right-hand side is at most $\frac{3}{2\ln 2}<3$, while the left-hand 
side is at least $13$. Thus, it also has no critical points, and its maximum
is similarly taken at the boundary, $H=1$. Thus, overall, we find
\[
\sum_{y\in\Omega}D(y)(\log D(y))^2\leq m+\log(m+\log m+2.5)
\]
when $H\leq 1$ and $m\geq 2$.
\end{proof}

Although the assignment of probability mass used in the bound did not sum to 1,
nevertheless this bound is nearly tight. For any $\gamma>0$, and letting $H=1+
\Delta$ where $\Delta=\frac{1}{\log^{\gamma}(|\Omega|-2)}$, the following 
solution attains a ratio of $(1-o(1))m^{1-\gamma}$:
for any two $x^*_1,x^*_2\in \Omega$,
set $D(x^*_i)=\frac{1}{2}-\frac{\epsilon}{2}$ and set the
rest to $\frac{\epsilon}{|\Omega|-2}$, for $\epsilon$ chosen below. To obtain 
\begin{align*}
H&=2\cdot (\frac{1}{2}-\frac{\epsilon}{2})\log\frac{2}{1-\epsilon}+(|\Omega|-2)\cdot \frac{\epsilon}{|\Omega|-2}\log\frac{|\Omega|-2}{\epsilon}\\
&=(1-\epsilon)(1+\log(1+\frac{\epsilon}{1-\epsilon}))+\epsilon \log\frac{|\Omega|-2}{\epsilon}
\end{align*}
observe that since $\log(1+x)=\frac{x}{\ln 2}+\Theta(x^2)$, we will need to take
\begin{align*}
\epsilon &= \frac{\Delta}{\log(|\Omega|-2)+\log\frac{1-\epsilon}{\epsilon}-(1+\frac{1}{\ln 2})+\Theta(\epsilon^2)}\\
&= \frac{\Delta}{\log(|\Omega|-2)+\log\log(|\Omega|-2)+\log\frac{1}{\Delta}-(1+\frac{1}{\ln 2})-\frac{\epsilon}{\ln 2}+\Theta(\epsilon^2)}.
\end{align*}
For such a choice, we indeed obtain the ratio
\[
\frac{(1-\epsilon)\log^2\frac{2}{1-\epsilon}+\epsilon\log^2\frac{(|\Omega|-2)}{\epsilon}}{H^2}
\geq (1-o(1))m^{1-\gamma}.
\]

Using these bounds, we are finally ready to prove Theorem~\ref{main-thm}:

\begin{proof}
We first consider the case where no $x$ has 
$D(x) > 1/2$; here, the condition in line 6 of {\EntropyEstimation} never 
passes, so we return the value obtained by {\SampleEst} on line 12. Note that we
must have $H(D)\geq 1$ in this case. So, by
Lemma \ref{y-bound-lem},
\[
\frac{\sum_{x\in\Omega}D(x)(\log D(x))^2}
{\left(\sum_{x\in\Omega}D(x)\log \frac{1}{D(x)}\right)^2}
\leq \left(1+\frac{\log(m+\log m+1.1)}{m}\right)m
\]
and hence, by Lemma~\ref{sampleest-lem}, using 
$t\geq\frac{6\cdot (m+\log(m+\log m+1.1))-1)}{\varepsilon^2}$ 
suffices to ensure
that the returned $\hat{h}$ is satisfactory with probability $1-\delta$.

Next, we consider the case where some $x^*\in\Omega$ has 
$D(x^*) > 1/2$. Since the total probability is 1, there can be at most one
such $x^*$. So, in the distribution conditioned on $x\neq x^*$,
i.e., $\{D'(y)\}_{y\in\Omega}$ 
that sets $D'(x^*)=0$, and $D'(x)=\frac{D(x)}{1-D(x^*)}$ otherwise, 
we now need to show that $t$ satisfies
\[
\frac{1}{t\varepsilon^2}\left(\frac{\sum_{y\neq x^*}D'(y)(\log\frac{1}{(1-D(x^*))D'(y)})^2}{(\sum_{y\neq x^*}D'(y)\log\frac{1}{(1-D(x^*))D'(y)})^2}-1\right)<\frac{1}{6}
\]
to apply Lemma~\ref{sampleest-lem}. We first rewrite this expression. Letting
$H=\sum_{y\neq x^*}D'(y)\log\frac{1}{D'(y)}$ be the entropy
of this conditional distribution,
\begin{align*}
\frac{\sum_{y\neq x^*}D'(y)(\log\frac{1}{(1-D(x^*))D'(y)})^2}{(\sum_{y\neq x^*}D'(y)\log\frac{1}{(1-D(x^*))D'(y)})^2}
&=\frac{\sum_{y\neq x^*}D'(y)(\log\frac{1}{D'(y)})^2+2H\log\frac{1}{1-D(x^*)}+(\log\frac{1}{1-D(x^*)})^2}{(H+\log\frac{1}{1-D(x^*)})^2}\\
&=\frac{\sum_{y\neq x^*}D'(y)(\log\frac{1}{D'(y)})^2-H^2}{(H+\log\frac{1}{1-D(x^*)})^2}+1.
\end{align*}
There are now two cases depending on whether $H$ is greater than 1 or 
less than 1. When it is greater than 1, the first part of Lemma~\ref
{y-bound-lem} again gives
\[
\frac{\sum_{y \in \Omega}D'(y)(\log D'(y))^2}
{H^2}
\leq m+\log(m+\log m+1.1).
\]
When $H<1$, on the other hand, recalling $D(x^*)>1/2$ (so $\log\frac{1}{1-D(x^*)}\geq 1$), the second part of Lemma~\ref{y-bound-lem} gives that our expression is less than
\[
\frac{m+\log(m+\log m+2.5))-H^2}{(H+\log\frac{1}{1-D(x^*)})^2}
<m+\log(m+\log m+2.5).
\]
Thus, by Lemma~\ref{sampleest-lem},
\[
t\geq \frac{6\cdot (m+\log(m+\log m+2.5))}{\varepsilon^2}
\]
suffices to obtain $\hat{h}$ such that $\hat{h}\leq (1+\varepsilon)\sum_{y\neq x^*}
\frac{D(y)}{1-D(x^*)}\log\frac{1}{D(y)}$  and 
$\hat{h}\geq (1-\varepsilon)\sum_{y\neq x^*}
\frac{D(y)}{1-D(x^*)}\log\frac{1}{D(y)}$; 
hence we obtain such a $\hat{h}$ with probability at least $1-0.9\cdot \delta$ 
in line~\ref{algo:entropyset:line:hath-inr}, if we pass the test on line~\ref{algo:entropyset:line:checkr} of Algorithm~\ref{algo:entropyest},
thus identifying $\sigma^*$. Note that this value is adequate, so we need only
guarantee that the test on line~\ref{algo:entropyset:line:checkr} passes on one of the iterations with 
probability at least $1-0.1\cdot\delta$.

To this end, note that each sample($x$) on line~\ref{algo:entropyset:line:sample} is equal to $x^*$ with
probability $D(x^*) > \frac{1}{2}$ by hypothesis. Since
each iteration of the loop is an independent draw, the probability that the
condition on line~\ref{algo:entropyset:line:checkr} is not met after $\log\frac{10}{\delta}$ draws is less than
$(1-\frac{1}{2})^{\log\frac{10}{\delta}}=\frac{\delta}{10}$, as needed.
\end{proof}

\section{Application to Quantitative Information Flow}\label{sec:experiments}

We demonstrate the practicality of {\EntropyEstimation} via an application to quantitative information flow (QIF) analysis, a subject of increasing interest in the software engineering community. We begin by recalling the setting and defining notation that is often employed in the QIF community. We then discuss how the algorithm {\EntropyEstimation} can be implemented in practice and demonstrate the empirical effectiveness of {\EntropyEstimation}. 

\subsection{QIF Formulation}
\subsubsection*{Notation}
We use lower case letters (with subscripts) to denote propositional variables and upper case letters to denote a subset of variables. A literal is a boolean variable or its negation. 
We write $ V \varphi(U,V)$ to denote a formula over blocks of variables $U=\{u_1,\cdots,u_n\}$ and $V=\{v_1,\cdots,v_m\}$. For notational clarity, we use $\varphi$ to refer to $\varphi(U,V)$ when clear from the context. We denote $Vars(\varphi)$ as the set of variables appearing in $\varphi$, i.e. $Vars(\varphi) = U \cup V$

A \textit{satisfying assignment} or solution of a formula $\varphi$ is a mapping $\tau: Vars(\varphi) \rightarrow \{0,1\}$, on which the formula evaluates to True. We denote the set of all the solutions of $\varphi$ as $\satisfying{\varphi}$.  The problem of \emph{model counting} is to compute
$|\satisfying{\varphi}|$ for a given formula $\varphi$.  
An {\em uniform sampler} outputs a solution $y \in \satisfying{\varphi}$ such that $\prob[y \text{ is output}] = \frac{1}{|\satisfying{\varphi}|}$.

 For $\mathcal{P} \subseteq Vars(\varphi)$, $\tau_{\downarrow \mathcal{P}}$ represents the truth values of variables in $\mathcal{P}$ in a satisfying assignment $\tau$ of $\varphi$. For $\mathcal{P} \subseteq Vars(\varphi)$, we define $\satisfying{\varphi}_{\downarrow \mathcal{P}}$ as the set of solutions of $\varphi$ projected on $\mathcal{P}$.   Projected model counting and uniform sampling are defined
 analogously using $\satisfying{\varphi}_{\downarrow \mathcal{P}}$ instead of
 $\satisfying{\varphi}$, for a given projection
 set $\mathcal{P} \subseteq Vars(\varphi)$.

We say that $\varphi$ is a circuit formula if for all assignments $\tau_1, \tau_2 \in \satisfying{\varphi}$, we have ${\tau_1}_{\downarrow U}  = {\tau_2}_{\downarrow U} \implies \tau_1 = \tau_2$. 
For a circuit formula $\varphi(U,V)$ and for $\sigma: V \mapsto \{0,1\}$, we define $p_{\sigma} = \frac{|\satisfying{\varphi( V \mapsto \sigma)}|}{|\satisfying{\varphi}_{\downarrow U}|}$. 
Given a circuit formula $\varphi(U,V)$, we define the entropy of $\varphi$, denoted by $\entropy{\varphi}$ as follows:
$\entropy{\varphi} = - \sum_{\sigma \in 2^{V}} p_{\sigma} \log (p_{\sigma})$. 

\subsubsection*{Oracles based on Projected Counting and Sampling}

We now discuss how we can implement the oracles, EVAL, COND, and {\proc} given access to counters and uniform samplers. For $\sigma \in 2^{V}$, in order to compute $p_{\sigma}$,  we make two queries to a model counter to compute the numerator and denominator respectively. To compute the numerator, we invoke the counter on the formula $\varphi(V \mapsto \sigma)$ and we compute the denominator by invoking it on the formula $\varphi$ with the projection set $\mathcal{P}$ set to $U$. 

In order to sample $\sigma \in 2^V$ with probability $p_{\sigma}$, 
 given access to a uniform sampler, we can simply first sample $\tau \in \satisfying{\varphi}$ uniformly at random, and then output $\sigma =  \tau_{\downarrow V}$, which ensures $\Pr[\sigma \text{ is output}] = p_{\sigma}$. To condition on a set $S$ in Boolean formulas, we first construct a formula $\psi$ such that $\satisfying{\psi} = S$ and then invoke the sampler/counters on the formula $\varphi \wedge \psi$. 
 
Therefore, given access to a projected counter and sampler, we can implement {\proc} by a query to a uniform sampler followed by two queries to a model counter. Observe that the denominator in the computation of $p_{\sigma}$ is identical for all $\sigma$, therefore, from the view of practical efficiency, we can save the denominator in memory and reuse it for all the subsequent calls.

\subsubsection*{QIF Modeling}
A program $\Pi$ maps a set of controllable inputs ($C$) and secret inputs ($I$) to outputs ($O$) observable to an attacker. The attacker is interested in inferring $I$ based on the output $O$.  It is standard in the security community to employ circuit formulas to model such programs. To this end, we will focus on the case where the given program $\Pi$ is modeled using a circuit formula $\varphi(U,V)$.  

A straightforward adaptation of {\EntropyEstimation} would give us an approximation scheme with $\mathcal{O}(\frac{m}{\varepsilon^2} \log \frac{1}{\delta})$ model counting and sampling queries. While the developments in the past decade has led to significant improvements in the runtime performance of counters and samplers, it is of course still desirable to reduce the query complexity. 

We observe that in this model, in which we have access to $\varphi$, we can infer further properties of the distribution. In particular, for all  $\sigma \in 2^{V}$, we have $p_\sigma\geq 1/2^{|U|}$. This gives us another bound on the relative variance of the self information:

\begin{lemma}\label{x-bound-lem}
	Let $\{p_{\sigma}\in [1/2^{|U|},1]\}_{\sigma\in 2^V}$ 
	be given. 
	Then,
	\[
	\sum_{\sigma\in 2^V}p_{\sigma}(\log p_{\sigma})^2
	\leq |U|\sum_{\sigma\in 2^V}p_{\sigma}\log \frac{1}{p_{\sigma}}
	\]
\end{lemma}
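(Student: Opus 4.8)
The plan is to exploit the lower bound $p_\sigma\geq 1/2^{|U|}$, which means that for every $\sigma$ in the support we have $\log\frac{1}{p_\sigma}\leq |U|$. First I would observe that this is exactly a pointwise bound on the self-information: writing $\ell_\sigma = \log\frac{1}{p_\sigma}\geq 0$, we have $0\leq \ell_\sigma\leq |U|$ for all $\sigma$. Then each term in the second-moment sum satisfies $p_\sigma \ell_\sigma^2 = p_\sigma \ell_\sigma \cdot \ell_\sigma \leq |U|\cdot p_\sigma\ell_\sigma$, since multiplying the nonnegative quantity $p_\sigma\ell_\sigma$ by $\ell_\sigma\leq|U|$ can only increase it by at most a factor $|U|$.

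Summing this inequality over all $\sigma\in 2^V$ gives
\[
\sum_{\sigma\in 2^V} p_\sigma(\log p_\sigma)^2 = \sum_{\sigma\in 2^V} p_\sigma \ell_\sigma^2 \leq |U|\sum_{\sigma\in 2^V} p_\sigma\ell_\sigma = |U|\sum_{\sigma\in 2^V} p_\sigma\log\frac{1}{p_\sigma},
\]
which is precisely the claimed bound. Note that $(\log p_\sigma)^2 = (\log\frac{1}{p_\sigma})^2 = \ell_\sigma^2$, so the left-hand side matches the statement as written; the only mild care needed is to confirm that terms with $p_\sigma = 0$ (if any $\sigma$ is not in the support) contribute $0$ to both sides under the usual convention $0\log 0 = 0$, so they can be harmlessly dropped before applying the pointwise bound.

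Honestly, there is no real obstacle here — the entire content is the pointwise bound $\log\frac{1}{p_\sigma}\leq |U|$, which is immediate from the hypothesis $p_\sigma\geq 2^{-|U|}$, followed by a one-line term-by-term comparison and summation. The only thing worth stating explicitly for the reader is how this lemma is meant to be used downstream: combined with the denominator $(\sum_\sigma p_\sigma\log\frac{1}{p_\sigma})^2 = H(\varphi)^2$ and the $H(\varphi)\geq 1$ / win-win case analysis from the proof of Theorem~\ref{main-thm}, it yields a relative-variance bound of the form $|U|/H$ rather than $m/H$, which is what lets one replace $m$ by $\min(m,|U|)=\min(m,n)$ in the query complexity. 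So in writing this up I would keep the proof to essentially the three lines above and perhaps add a remark tying it back to Lemma~\ref{sampleest-lem}.
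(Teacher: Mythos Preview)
Your proof is correct and is essentially identical to the paper's own argument: the paper also observes in one line that $\sum_{\sigma}p_{\sigma}(\log p_{\sigma})^2 \leq \log 2^{|U|}\sum_{\sigma}p_{\sigma}\log\frac{1}{p_{\sigma}} = |U|\sum_{\sigma}p_{\sigma}\log\frac{1}{p_{\sigma}}$, which is exactly your pointwise bound $\ell_\sigma \leq |U|$ followed by summation. The only superfluous part is your remark about $p_\sigma=0$, since the hypothesis $p_\sigma\in[1/2^{|U|},1]$ already rules that out.
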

\begin{proof}
	We observe simply that
	\[
	\sum_{\sigma\in 2^V}p_{\sigma}(\log p_{\sigma})^2
	\leq
	\log 2^{|U|}\sum_{\sigma\in 2^V}p_{\sigma}\log\frac{1}{p_{\sigma}}	
	=|U|\sum_{\sigma\in 2^V}p_{\sigma}\log \frac{1}{p_{\sigma}}.
	\]
\end{proof}

The above bound allow us to improve the sample complexity of {\EntropyEstimation} from $\mathcal{O}(\frac{m}{\varepsilon^2} \log \frac{1}{\delta})$ to $\mathcal{O}(\frac{\min(m,n)}{\varepsilon^2} \log \frac{1}{\delta})$. To this end, we make two modifications to {\EntropyEstimation}, as follows:
\begin{enumerate}
	\item line~\ref{algo:entropyset:line:compute-t-inr}  is modified to 		$t \gets \frac{6}{\epsilon^2} \cdot \min\left\{\frac{n}{2\log\frac{1}{1-r}},m+\log(m+\log m+2.5)\right\}$ 
	\item  line~\ref{algo:entropyset:line:compute-t} is modified to $t \gets \frac{6}{\epsilon^2} \cdot (\min\left\{{n},m+\log(m+\log m+1.1)\right\}-1)$
\end{enumerate}

\begin{corollary}\label{qif-cor}
	Given access to a circuit formula $\varphi$ with $|V|\geq 2$,
	a tolerance parameter $\varepsilon > 0$, and confidence parameter
	$\delta > 0$, the modification of {\EntropyEstimation} for circuit
	formulas returns $\hat{h}$ such that
	\begin{align*}
		\Pr\left[ (1-\varepsilon)\entropy{\varphi}\leq \hat{h}
		\leq (1+\varepsilon)\entropy{\varphi}\right] \geq 1-\delta
	\end{align*}
\end{corollary}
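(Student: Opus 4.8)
The plan is to re-run the proof of Theorem~\ref{main-thm} essentially verbatim, with the sole change that every appeal to Lemma~\ref{y-bound-lem} is replaced by an appeal to the \emph{minimum} of the Lemma~\ref{y-bound-lem} bound and the Lemma~\ref{x-bound-lem} bound, and then checking that the two modified choices of $t$ still make the hypothesis $\frac{1}{t\varepsilon^2}(\text{ratio}-1)\leq 1/6$ of Lemma~\ref{sampleest-lem} hold. The only edits to the algorithm are in the two lines that set $t$, so the confidence bookkeeping is untouched: the $\lceil\log(10/\delta)\rceil$ iterations of the outer loop still sample an element $\sigma^*$ with $p_{\sigma^*}>1/2$ (when one exists) with probability at least $1-\delta/10$, {\SampleEst} is still invoked with confidence parameter $0.9\delta$, and by the oracle construction for circuit formulas, {\proc} for $\varphi$ is exactly {\proc} for the distribution $\{p_\sigma\}_{\sigma\in 2^V}$ on a domain of size $2^{|V|}=2^m$, with the extra structural fact that $p_\sigma\geq 2^{-n}$ for every $\sigma$.

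First I would treat the branch in which no $\sigma$ has $p_\sigma>1/2$, so the loop falls through to the (modified) line~\ref{algo:entropyset:line:compute-t}--\ref{algo:entropyset:line:callsamplees} with $\bar S=\varnothing$; here $H(\varphi)\geq 1$. Lemma~\ref{x-bound-lem} gives $\sum_\sigma p_\sigma(\log p_\sigma)^2\leq n\cdot H(\varphi)$, hence
\[
\frac{\sum_\sigma p_\sigma(\log p_\sigma)^2}{H(\varphi)^2}\leq \frac{n}{H(\varphi)}\leq n,
\]
and combining with the $m+\log(m+\log m+1.1)$ bound from Lemma~\ref{y-bound-lem} shows $\text{ratio}-1\leq \min\{n,\,m+\log(m+\log m+1.1)\}-1$, which is precisely $\frac{t\varepsilon^2}{6}$ for the modified value of $t$; so Lemma~\ref{sampleest-lem} applies.

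Next I would treat the branch in which some $\sigma^*$ has $p_{\sigma^*}=r>1/2$ (at most one such $\sigma^*$ exists). As in the proof of Theorem~\ref{main-thm}, writing $D'(\sigma)=p_\sigma/(1-r)$ for $\sigma\neq\sigma^*$, $H=\sum_{\sigma\neq\sigma^*}D'(\sigma)\log\frac1{D'(\sigma)}$, and $\ell=\log\frac1{1-r}\geq 1$, the quantity to bound is
\[
\text{ratio}-1=\frac{\sum_{\sigma\neq\sigma^*}D'(\sigma)(\log D'(\sigma))^2-H^2}{(H+\ell)^2}.
\]
Since conditioning only raises probabilities, $D'(\sigma)\geq p_\sigma\geq 2^{-n}$, so $\log\frac1{D'(\sigma)}\leq n$ for each $\sigma\neq\sigma^*$; the one-line argument of Lemma~\ref{x-bound-lem} then gives $\sum_{\sigma\neq\sigma^*}D'(\sigma)(\log D'(\sigma))^2\leq nH$. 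Dropping the $-H^2$ term and using $(H+\ell)^2\geq 4H\ell$ yields $\text{ratio}-1\leq \frac{nH}{4H\ell}=\frac{n}{4\ell}\leq\frac{n}{2\ell}$, independently of the (unknown) value of $H$. Taking the minimum with the Lemma~\ref{y-bound-lem} bound of $m+\log(m+\log m+2.5)$ (valid uniformly whether $H>1$ or $H<1$, exactly as in the original proof) shows the modified $t=\frac{6}{\varepsilon^2}\min\{\frac{n}{2\log\frac1{1-r}},\,m+\log(m+\log m+2.5)\}$ satisfies the hypothesis of Lemma~\ref{sampleest-lem}; hence {\SampleEst} returns $\hat h_{rem}$ within $(1\pm\varepsilon)$ of $\sum_{\sigma\neq\sigma^*}\frac{p_\sigma}{1-r}\log\frac1{p_\sigma}$ with probability $1-0.9\delta$, so $\hat h=(1-r)\hat h_{rem}+r\log\frac1r$ lies within $(1\pm\varepsilon)$ of $H(\varphi)=r\log\frac1r+H_{rem}$. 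A union bound over the $\leq\delta/10$ failure to ever sample $\sigma^*$ and the $\leq 0.9\delta$ failure of {\SampleEst} gives overall confidence $1-\delta$, just as in Theorem~\ref{main-thm}.

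The only non-routine point I anticipate is the estimate in the $\sigma^*$ branch: confirming that the AM--GM step $(H+\ell)^2\geq 4H\ell$ together with $\ell\geq 1$ really does pin $\text{ratio}-1$ down to $n/(2\ell)$ for every $H$, and double-checking that the short proof of Lemma~\ref{x-bound-lem} transfers to the conditioned probabilities $D'(\sigma)$ (it does, since each lies in $[2^{-n},1]$). Everything else is a direct transcription of the Theorem~\ref{main-thm} argument.
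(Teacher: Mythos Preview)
Your proposal is correct and follows essentially the same approach as the paper's proof: both cases invoke the minimum of the Lemma~\ref{x-bound-lem} and Lemma~\ref{y-bound-lem} bounds and feed the result into Lemma~\ref{sampleest-lem}, with the confidence bookkeeping carried over unchanged from Theorem~\ref{main-thm}. The only cosmetic difference is in the dominating-element branch: the paper keeps the $-H^2$ term and directly verifies $\frac{Hn-H^2}{(H+\ell)^2}<\frac{n}{2\ell}$, whereas you drop $-H^2$ and use $(H+\ell)^2\geq 4H\ell$ to get $\frac{n}{4\ell}\leq\frac{n}{2\ell}$; both are one-line elementary inequalities leading to the same value of $t$.
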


\begin{proof}
	The proof is very similar to the proof of Theorem~\ref{main-thm}, but we now
	make use of the bound in Lemma~\ref{x-bound-lem}: specifically, in the case
	where there is no $\sigma$ occurring with probability greater than $1/2$, Lemma\ref
	{x-bound-lem} together with Lemma~\ref{y-bound-lem} gives
	\[
	\frac{\sum_{\sigma\in 2^V}p_{\sigma}(\log p_{\sigma})^2}
	{\left(\sum_{\sigma\in 2^V}p_{\sigma}\log \frac{1}{p_{\sigma}}\right)^2}
	\leq \min\left\{|U|,\left(1+\frac{\log(|V|+\log|V|+1.1)}{|V|}\right)|V|\right\}
	\]
	and hence, by Lemma~\ref{sampleest-lem}, using
	$t\geq\frac{6\cdot \min\{|U|,|V|+\log(|V|+\log|V|+1.1)\}-1)}{\varepsilon^2}$
	indeed suffices to ensure
	that the returned $\hat{h}$ is satisfactory with probability $1-\delta$.
	
	Meanwhile, in the case where such a dominating element $\sigma^*$ exists, 
	letting $H=\sum_{\sigma\neq\sigma^*}p'_{\sigma}\log\frac{1}{p'_{\sigma}}$ be 
	the entropy of the distribution conditioned on avoiding $\sigma^*$, we note 
	that we had obtained
	\[
	\frac{\sum_{\sigma\neq\sigma^*}p'_{\sigma}(\log\frac{1}{(1-p_{\sigma^*})p'_{\sigma}})^2}{(\sum_{\sigma\neq\sigma^*}p'_{\sigma}\log\frac{1}{(1-p_{\sigma^*})p'_{\sigma}})^2}=\frac{\sum_{\sigma\neq\sigma^*}p'_{\sigma}(\log\frac{1}{p'_{\sigma}})^2-H^2}{(H+\log\frac{1}{1-p_{\sigma^*}})^2}+1.
	\]
	Lemma~\ref{x-bound-lem} now gives rather directly that this quantity is at most
	\[
	\frac{H|U|-H^2}{(H+\log\frac{1}{1-p_{\sigma^*}})^2}+1<\frac{|U|}{2\log\frac{1}{1-p_{\sigma^*}}}+1.
	\]
	Thus, by Lemma~\ref{sampleest-lem},
	\[
	t\geq \frac{6\cdot \min\{\frac{|U|}{2\log\frac{1}{1-p_{\sigma^*}}},|V|+\log(|V|+\log |V|+2.5)\}}{\varepsilon^2}
	\]
	now indeed suffices to obtain $\hat{h}$ such that $\hat{h}\leq (1+\varepsilon)\sum_{\sigma\neq\sigma^*}
	\frac{p_{\sigma}}{1-p_{\sigma^*}}\log\frac{1}{p_{\sigma}}$  and
	$\hat{h}\geq (1-\varepsilon)\sum_{\sigma\neq\sigma^*}
	\frac{p_{\sigma}}{1-p_{\sigma^*}}\log\frac{1}{p_{\sigma}}$. The rest of the
	argument is now the same as before.
\end{proof}

\subsection{Empirical Setup}

To evaluate the runtime performance of {\EntropyEstimation}, we implemented a prototype in Python that employs SPUR~\cite{AHT18} as a uniform sampler and GANAK~\cite{SRSM19} as a projected model counter.  We experimented with 96 Boolean formulas arising from diverse applications ranging from QIF benchmarks~\cite{FRS17}, plan recognition~\cite{SGM20}, bit-blasted versions of SMTLIB benchmarks~\cite{SGM20,SRSM19}, and QBFEval competitions~\cite{qbfeval17,qbfeval18}.
 The value of $n = |U|$ varies from 5 to 752 while the value of $m=|V|$ varies from 9 to 1447. 

In all of our experiments, the confidence parameter $\delta$ was set to $0.09$, and the tolerance parameter $\varepsilon$ was set to $0.8$.  All of our experiments were conducted on a high-performance computer cluster with each node consisting of a E5-2690 v3 CPU with 24 cores, and 96GB of RAM with a memory limit set to 4GB per core. Experiments were run in single-threaded mode on a single core with a timeout of 3000s. 
	
\paragraph{\textbf{Baseline:}} As our baseline, we implemented the following approach to compute the entropy exactly, which is representative of the current state of the art approaches~\cite{BPFP17,ESBB19,K12}. For each valuation $\sigma \in sol(\varphi)_{\downarrow V}$, we compute $p_{\sigma} = \frac{|sol(\varphi(V \mapsto \sigma))|}{|sol(\varphi)_{\downarrow U}|}$, where $|sol(\varphi(V \rightarrow \sigma))|$ is the count of satisfying assignments of $\varphi(V\mapsto\sigma)$, and $|sol(\varphi)_{\downarrow U}|$  represents the projected model count of $\varphi$ over $U$. Then, finally the entropy is computed as $\sum\limits_{\sigma \in 2^{V}} p_{\sigma} \log(\frac{1}{p_{\sigma}})$. Observe that from property testing perspective, this amount to only using EVAL oracle.

  Our evaluation demonstrates that {\tool} can scale to the formulas beyond the reach of the enumeration-based baseline approach. Within a given timeout of 3000 seconds, {\tool} is able to estimate the entropy for all the benchmarks, whereas the baseline approach could terminate only for 14 benchmarks. Furthermore, {\tool} estimated the entropy within the allowed tolerance for {\em all} the benchmarks.
\begin{table}[h]
	\centering

\begin{tabularx}{\textwidth}{lXXXXXXX}
		\toprule
		Benchmarks & $|U|$ & $|V|$ & \multicolumn{2}{c}{{Baseline}}&  & \multicolumn{2}{c}{\tool} \\
		\cmidrule[\lightrulewidth](lr){4-5} \cmidrule[\lightrulewidth](lr){7-8}
		&   &  & Time(s) & \makecell{EVAL\\queries} &   & Time(s) & \makecell{{\proc}\\queries}  \\ \midrule
		
		pwd-backdoor &  336 &  64 &  - &  1.84$\times  10^{19}$  &   &  5.41 &  1.25$\times  10^{2}$  \\  
		
		case31 & 13 & 40 & 201.02 & 1.02$\times  10^{3}$ &   & 125.36 & 5.65$\times  10^{2}$ \\

		case23 &  14 &  63    &  420.85 &  2.05$\times   10^{3}$ &  &  141.17 &  6.10$\times   10^{2}$  \\

		s1488\_15\_7 &  14 &  927   &  1037.71 &  3.84$\times   10^{3}$ & &  150.29 &  6.10$\times   10^{2}$    \\ 
		
		case58 &  19 &  77   &  3835.38 &  1.77$\times   10^{4}$ &   &  198.34 &  8.45$\times   10^{2}$ \\  
		
		bug1-fix-4 &  53 &  17 & 373.52 &  1.76$\times   10^{3}$ &   &  212.37 &  9.60$\times   10^{2}$   \\  
		
		s832a\_15\_7 &  23 &  670    &  - &  2.65$\times   10^{6}$ & &  247 &  1.04$\times   10^{3}$  \\  
		
		dyn-fix-1 &  40 &  48    &  - &  3.30$\times   10^{4}$ & &  252.2 &  1.83$\times   10^{3}$  \\ 
		
		s1196a\_7\_4 &  32 &  676    &  -&  4.22$\times   10^{7}$ &   &  343.68 &  1.46$\times   10^{3}$ \\
		
		backdoor-2x16 &  168 &  32    &  - &  1.31$\times   10^{5}$ &  &  405.7 &  1.70$\times   10^{3}$ \\
		
		CVE-2007 &  752 &  32 &  - &  4.29$\times   10^{9}$ &    &  654.54 &  1.70$\times   10^{3}$ \\
		
		subtraction32 &  65 &  218  & - &  1.84$\times   10^{19}$ &  &  860.88 &  3.00$\times   10^{3}$    \\ 
		
		case\_1\_b11\_1 &  48 &  292   &  - &  2.75$\times   10^{11}$ &  &  1164.36 &  2.20$\times   10^{3}$    \\ 
		
		s420\_new\_15\_7-1 &  235 &  116   &  - &  3.52$\times   10^{7}$ &  &  1187.23 &  5.72$\times   10^{3}$   \\

		case145 &  64 &  155   &  - &  7.04$\times   10^{13}$  &   &  1243.11 &  2.96$\times   10^{3}$\\   
		
		floor64-1 &  405 &  161   &  - &  2.32$\times   10^{27}$  &  &  1764.2 &  7.85$\times   10^{3}$\\

		s641\_7\_4 &  54 &  453    &  - &  1.74$\times   10^{12}$ &  &  1849.84 &  2.48$\times   10^{3}$  \\ 
		
		decomp64 &  381 &  191   &  - &  6.81$\times   10^{38}$ &  &  2239.62 &  9.26$\times   10^{3}$  \\

		squaring2 &  72 &  813   &  - &  6.87$\times   10^{10}$ &   &  2348.6 &  3.33$\times   10^{3}$ \\ 
		
		stmt5\_731\_730 &  379 &  311    &  - &  3.49$\times   10^{10}$ &    &  2814.58 &  1.49$\times   10^{4}$\\

		\bottomrule
		\\
	\end{tabularx}
	
	\caption{Entropy Estimation by {\tool} vs Baseline.  ``-''  represents that entropy could not be estimated due to timeout. Note that $m=|V|$ and $n=|U|$.}\label{tab:scalable}
\end{table}

\subsection{Scalability of {\tool}}

Table~\ref{tab:scalable} presents the performance of {\tool} vis-a-vis the baseline approach for 20 benchmarks. (The complete analysis for all of the benchmarks can be found in the appendix.) Column 1 of Table~\ref{tab:scalable} gives the names of the benchmarks, while columns 2 and 3 list the numbers of $U$ and $V$ variables. Columns 4 and 5 respectively present the time taken, number of samples used by baseline approach, and columns 6 and 7 present the same for {\tool}. The required number of samples for the baseline approach is $|sol(\varphi)_{\downarrow V}|$. We use ``-'' to represent timeout.

Table~\ref{tab:scalable} clearly demonstrates that {\tool} outperforms the baseline approach. As shown in Table~\ref{tab:scalable}, there are some benchmarks for which the projected model count on $V$ is greater than $10^{30}$, i.e., the baseline approach would need $10^{30}$ valuations to compute the entropy exactly. By contrast, the proposed algorithm {\tool} needed at most $\sim 10^{4}$ samples to estimate the entropy within the given tolerance and confidence. The number of samples required to estimate the entropy is reduced significantly with our proposed approach, making it scalable.

\subsection{Quality of Estimates}

There were only $14$ benchmarks out of $96$ for which the enumeration-based baseline approach finished within a given timeout of $3000$ seconds. Therefore, we compared the entropy estimated by {\tool} with the baseline for those 14 benchmarks only. Figure~\ref{fig:ratio-accuray} shows how accurate were the estimates of the entropy by {\tool}. The y-axis represents the observed error, which was calculated as $max(\frac{\text{Estimated}}{\text{Exact}}-1,\frac{\text{Exact}}{\text{Estimated}}-1)$, and the x-axis represents the benchmarks ordered in ascending order of observed error; that is, a bar at $x$ represents the observed error for a benchmark---the lower, the better.

\begin{figure}[htb]
	\centering
	\includegraphics[scale=0.50]{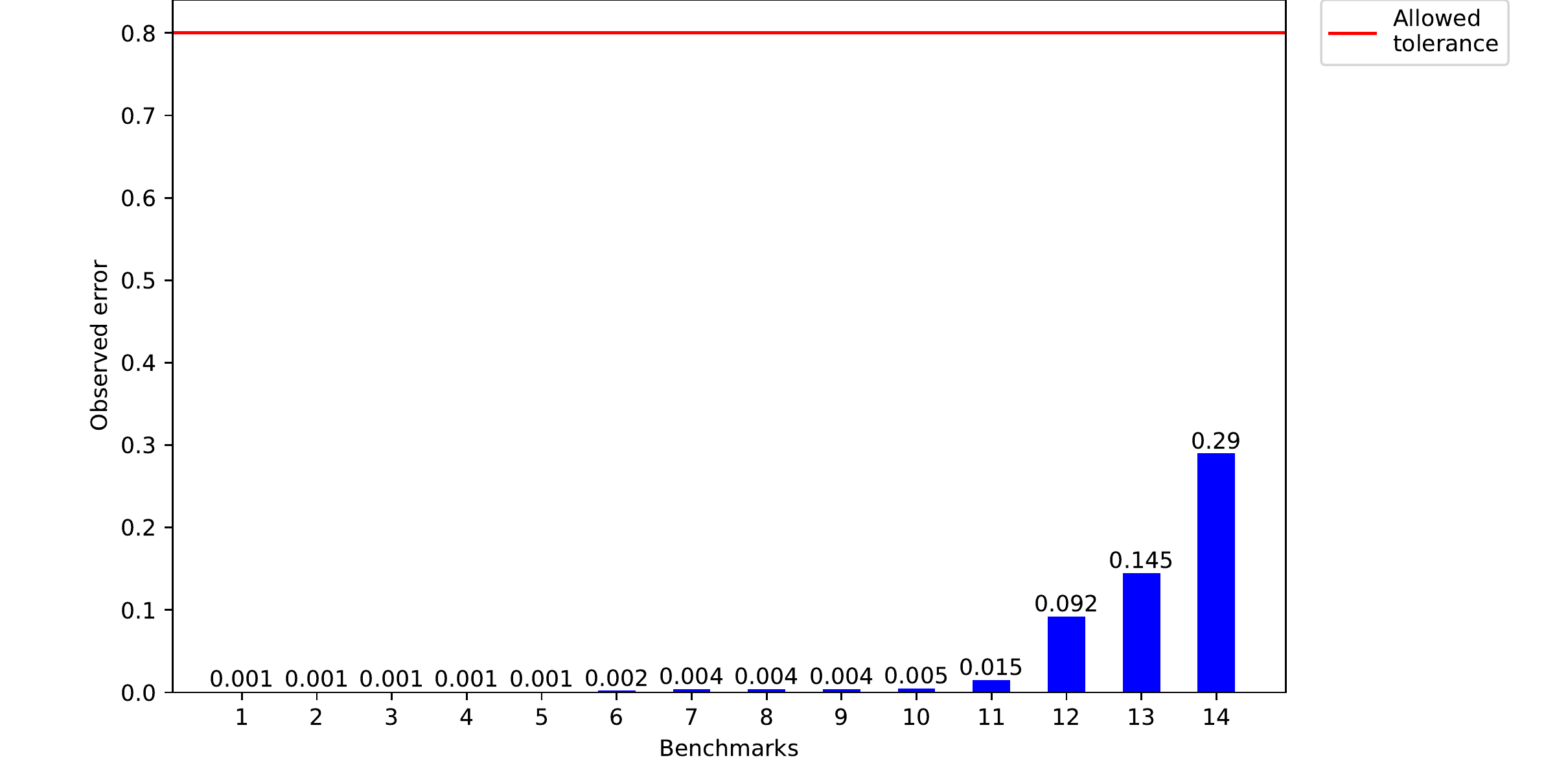}
	\caption{\label{fig:ratio-accuray} The accuracy of estimated entropy using {\tool} for $14$ benchmarks. $\varepsilon = 0.8, \delta = 0.1$.}
\end{figure}

The maximum allowed tolerance $(\varepsilon)$ for our experiments was set to $0.80$. The red horizontal line in Figure~\ref{fig:ratio-accuray} indicates this prescribed error tolerance.
We observe that for \emph{all} 14 benchmarks, {\tool} estimated the entropy within the allowed tolerance; in fact, the observed error was greater than $0.1$ for just 2 out of the 14 benchmarks, and the actual maximum error observed was $0.29$.
\paragraph{Alternative Baselines} As we discussed earlier, several other algorithms have been proposed for estimating the entropy. For example, Valiant and Valiant's algorithm~\cite{VV17} obtains an $\varepsilon$-additive approximation using $\mathcal{O}(\frac{2^m}{\varepsilon^2m})$ samples, and Chakraborty et al.~\cite{CFGM16} compute such approximations using $\mathcal{O}(\frac{m^7}{\varepsilon^8})$ samples. We stress that neither of these is exact, and thus could not be used to assess the accuracy of our method as presented in Figure~\ref{fig:ratio-accuray}. Moreover, based on Table~\ref{tab:scalable}, we observe that the number of sampling or counting calls that could be computed within the timeout was roughly $2\times 10^4$, where $m$ ranges between $10^1$--$10^3$. Thus, the method of Chakraborty et al., which would take $10^7$ or more samples on all benchmarks, would not be competitive with our method, which never used $2\times 10^4$ calls. The method of Valiant and Valiant, on the other hand, would likely allow a few more benchmarks to be estimated (perhaps up to a fifth of the benchmarks). Still, it would not be competitive with our technique except in the smallest benchmarks (for which the baseline required $<10^6$ samples, about a third of our benchmarks), since we were otherwise more than a factor of $m$ faster than the baseline.

\subsection{Beyond Boolean Formulas}\label{sec:beyond}
We now focus on the case where the relationship between $U$ and $V$ is modeled by an arbitrary relation $\mathcal{R}$ instead of a Boolean formula $\varphi$. As noted in Section~\ref{sec:qif-intro}, program behaviors are often modeled with other representations such as automata~\cite{ABB15,AEL+18,B19}. The automata-based modeling often has $U$ represented as the input to the given automaton $\mathcal{A}$ while every realization of $V$ corresponds to a state of $\mathcal{A}$. Instead of an explicit description of $\mathcal{A}$, one can rely on a symbolic description of $\mathcal{A}$. Two families of techniques are currently used to estimate the entropy. The first technique is to enumerate the possible {\em output} states and, for each such state $s$, estimate the number of strings accepted by $\mathcal{A}$ if $s$ was the only accepting state of $\mathcal{A}$. The other technique relies on uniformly sampling a string $\sigma$, noting the final state of $\mathcal{A}$ when run on $\sigma$, and then applying a histogram-based technique to estimate the entropy. 

In order to use the algorithm {\EntropyEstimation} one requires access to a sampler and model counter for automata; the past few years have witnessed the design of efficient counters for automata to handle string constraints. In addition, {\EntropyEstimation} requires access to a conditioning routine to implement the substitution step, i.e., $V \mapsto \sigma_{\downarrow V}$, which is easy to accomplish for automata via marking the corresponding state as a non-accepting state.    

\section{Conclusion}\label{sec:conclusion}
We thus find that the ability to draw conditional samples and obtain the
probability of those samples enables practical algorithms for estimating the
Shannon entropy, even in the low-entropy regime: we have only a linear
dependence on the number of bits to write down an element of the distribution,
and only a quadratic dependence on the approximation parameter $\epsilon$.
The constant factors are sufficiently small that the algorithm obtains good
performance on real benchmarks for computing the entropy of distributions
sampled by circuits, when the oracles are instantiated using existing methods
for model counting and sampling for formulas. Indeed, we find that this
setting is captured well by the property testing model, where the solvers for
these hard problems are treated as oracles and the number of calls is the
complexity measure of interest.

As mentioned in the introduction, the most interesting open question is 
whether or not $\mathcal{O}(\frac{m}{\epsilon^2}\log\frac{1}{\delta})$ is the
optimal number of such queries, even given an evaluation oracle with arbitrary
conditional samples. If the complexity could be reduced to $\mathcal{O}%
(\mathrm{poly}(\frac{\log m}{\epsilon})\log\frac{1}{\delta})$ as is the case 
for support size estimation (cf.~Acharya et al.~\cite{ACK18}) this would
further emphasize the power of conditional sampling. 

An important, related question is whether or not we can similarly efficiently 
obtain multiplicative estimates of the \emph{mutual information} between two 
variables in such a model, particularly in the low-information regime. Suppose
that we separate the outputs of the circuit into two parts, $Y$ and $Z$, where
$Z$ could represent some secret value, for example. (Since $Z$ can report part
of the input, this is more general than the problem of computing the mutual 
information with a secret portion of the input.) Observe that while we can 
separately estimate $H(Z)$ and $H(Z|Y)$ and compute an estimate of $I(Z;Y)$ 
from these, we obtain an error on the scale of $\epsilon\cdot \max\{H(Z),
H(Z|Y)\}$ which may be much larger than $\epsilon \cdot (H(Z)-H(Z|Y))$, which
is what would be required for a $(1+\epsilon)$-multiplicative approximation.

One further question suggested by this work is the relative power of an oracle
that reveals the conditional probability of the sample $\sigma$ obtained from
$D$ conditioned on $S$, instead of its probability under $D$. (Our algorithm
can certainly be adapted to such a model.) But as we noted in the 
introduction, whereas the oracle we considered in this work can be simulated
by a combination of the usual evaluation and conditional sampling oracles, it
seems unlikely that we could efficiently simulate this alternative 
probability-revealing conditional sampling oracle. This is because given a
single query to the probability-revealing conditional sampling oracle and one
additional query to an evaluation oracle (for the sampled value), we would 
be able to compute the \emph{exact} probability of the arbitrary event $S$, 
where this should require a large number of queries in the conditional 
sampling and evaluation model. We note that it seems that such an oracle could
equally well be implemented in practice in the circuit-formula setting we 
considered; does the additional power it grants allow us to do anything 
interesting?

Finally, an interesting direction for
future work on the practical side would be to extend {\EntropyEstimation} to 
handle other representations of programs such as automata-based models.

\small{\paragraph{Acknowledgments:} This work was supported in part by  National Research Foundation Singapore under its NRF Fellowship Programme[NRF-NRFFAI1-2019-0004 ], Ministry of Education Singapore Tier 2 grant [MOE-T2EP20121-0011],  NUS ODPRT grant [R-252-000-685-13], an Amazon Research Award, and  NSF awards IIS-1908287, IIS-1939677, and IIS-1942336. We are grateful to the anonymous reviewers for constructive comments to improve the paper. The computational work was performed on resources of the National Supercomputing Centre, Singapore: \url{https://www.nscc.sg}.}
\bibliographystyle{alpha}
\bibliography{ref}

\clearpage
\appendix

\section*{Detailed Experimental Analysis}
\begin{longtable}[h]{cccccccccc} 
     \caption{ Detailed results for all 96 benchmarks. Entropy Estimation by {\tool} vs Baseline.  $\delta: 0.09, \epsilon: 0.8$, and timeout 3000s. ``-''  represents that entropy could not be estimated due to timeout.}\\ %
         \midrule
         Benchmarks & $|U|$ & $|V|$ & \multicolumn{3}{c}{Baseline} &  & \multicolumn{3}{c}{{\tool}} \\[1em]
         \cmidrule{4-6} \cmidrule{8-10}
          &   &  & Time(s) & \makecell{ EVAL\\ Queries} & Entropy &  & Time(s) & \makecell{{\proc}\\ Queries} & Entropy \\ \hline \\
pwd-backdoor &  336 &  64 &  - &  1.84$\times  10^{19}$ &  - &   &  5.41 &  1.25$\times  10^{2}$ &  1.56$\times 10^{-19}$ \\  
case206 &  5 &  9 &  1.84 &  4.00$\times  10^{0}$ &  2.0 &   &  45.55 &  1.90$\times  10^{2}$ &  2.03 \\  
s27\_new\_7\_4 &  7 &  10 &  2.32 &  6.00$\times  10^{0}$ &  2.0 &   &  64.18 &  2.85$\times  10^{2}$ &  2.58 \\  
case31 &  13 &  40 &  201.02 &  1.02$\times  10^{3}$ &  10.0 &   &  125.36 &  5.65$\times  10^{2}$ &  10.04 \\  
case26 &  13 &  40 &  251.9 &  1.02$\times  10^{3}$ &  10.0 &   &  130.73 &  5.65$\times  10^{2}$ &  10.04 \\  
case27 &  13 &  39 &  195.94 &  1.02$\times  10^{3}$ &  10.0 &   &  133.08 &  5.65$\times  10^{2}$ &  10.04 \\  
case29 &  14 &  51 &  49.78 &  2.56$\times  10^{2}$ &  8.0 &   &  135.41 &  6.10$\times  10^{2}$ &  8.01 \\  
case23 &  14 &  63 &  420.85 &  2.05$\times  10^{3}$ &  11.0 &   &  141.17 &  6.10$\times  10^{2}$ &  11.01 \\  
s1488\_7\_4 &  14 &  858 &  2707.88 &  8.70$\times  10^{3}$ &  13.11 &   &  141.48 &  6.10$\times  10^{2}$ &  13.09 \\  
s1488\_15\_7 &  14 &  927 &  1037.71 &  3.84$\times  10^{3}$ &  11.92 &   &  150.29 &  6.10$\times  10^{2}$ &  11.91 \\  
bug1-fix-3 &  40 &  13 &  59.05 &  3.04$\times  10^{2}$ &  8.99 &   &  165.75 &  7.60$\times  10^{2}$ &  7.85 \\  
s298\_7\_4 &  17 &  206 &  - &  6.55$\times  10^{4}$ &  - &   &  166.93 &  7.50$\times  10^{2}$ &  16.0 \\  
case111 &  17 &  289 &  - &  1.64$\times  10^{4}$ &  - &   &  170.14 &  7.50$\times  10^{2}$ &  14.0 \\  
case113 &  18 &  291 &  - &  3.28$\times  10^{4}$ &  - &   &  176.34 &  8.00$\times  10^{2}$ &  15.06 \\  
case112 &  18 &  119 &  - &  3.28$\times  10^{4}$ &  - &   &  178.6 &  8.00$\times  10^{2}$ &  15.06 \\  
case4 &  18 &  85 &  - &  3.28$\times  10^{4}$ &  - &   &  188.1 &  8.00$\times  10^{2}$ &  15.06 \\  
bug1-fix-6 &  79 &  25 &  - &  6.23$\times  10^{4}$ &  - &   &  193.39 &  1.36$\times  10^{3}$ &  15.36 \\  
case64 &  19 &  74 &  - &  3.53$\times  10^{4}$ &  - &   &  194.06 &  8.45$\times  10^{2}$ &  15.13 \\  
case58 &  19 &  77 &  3835.38 &  1.77$\times  10^{4}$ &  14.11 &   &  198.34 &  8.45$\times  10^{2}$ &  14.13 \\  
case1 &  20 &  167 &  - &  6.55$\times  10^{4}$ &  - &   &  201.63 &  8.95$\times  10^{2}$ &  16.08 \\  
case53 &  21 &  111 &  - &  2.62$\times  10^{5}$ &  - &   &  207.03 &  9.40$\times  10^{2}$ &  18.05 \\  
bug1-fix-4 &  53 &  17 &  373.52 &  1.76$\times  10^{3}$ &  10.41 &   &  212.37 &  9.60$\times  10^{2}$ &  10.36 \\  
case46 &  22 &  154 &  - &  6.55$\times  10^{4}$ &  - &   &  214.78 &  9.85$\times  10^{2}$ &  16.01 \\  
case51 &  21 &  111 &  - &  2.62$\times  10^{5}$ &  - &   &  221.67 &  9.40$\times  10^{2}$ &  18.05 \\  
case54 &  23 &  180 &  - &  5.24$\times  10^{5}$ &  - &   &  231.67 &  1.04$\times  10^{3}$ &  19.07 \\  
s344\_7\_4 &  24 &  191 &  - &  9.54$\times  10^{5}$ &  - &   &  242.45 &  1.08$\times  10^{3}$ &  18.72 \\  
s444\_15\_7 &  24 &  353 &  - &  4.13$\times  10^{6}$ &  - &   &  244.88 &  1.08$\times  10^{3}$ &  22.02 \\  
s444\_7\_4 &  24 &  284 &  - &  1.28$\times  10^{7}$ &  - &   &  245.37 &  1.08$\times  10^{3}$ &  23.65 \\  
s832a\_15\_7 &  23 &  670 &  - &  2.65$\times  10^{6}$ &  - &   &  247 &  1.04$\times  10^{3}$ &  21.33 \\  
s832a\_3\_2 &  23 &  583 &  - &  2.87$\times  10^{5}$ &  - &   &  250.17 &  1.04$\times  10^{3}$ &  18.19 \\  
dyn-fix-1 &  40 &  48 &  - &  3.30$\times  10^{4}$ &  - &   &  252.2 &  1.83$\times  10^{3}$ &  15.02 \\  
s526\_3\_2 &  24 &  341 &  - &  4.19$\times  10^{6}$ &  - &   &  257.56 &  1.08$\times  10^{3}$ &  22.04 \\  
case136 &  42 &  169 &  - &  5.50$\times  10^{11}$ &  - &   &  262.21 &  1.92$\times  10^{3}$ &  39.06 \\  
bug1-fix-5 &  66 &  21 &  2520.7 &  1.04$\times  10^{4}$ &  14.0 &   &  264.68 &  1.16$\times  10^{3}$ &  12.82 \\  
case122 &  27 &  287 &  - &  1.68$\times  10^{7}$ &  - &   &  272.19 &  1.22$\times  10^{3}$ &  24.02 \\  
case114 &  28 &  400 &  - &  1.71$\times  10^{7}$ &  - &   &  285.78 &  1.27$\times  10^{3}$ &  25.09 \\  
case115 &  28 &  400 &  - &  1.69$\times  10^{7}$ &  - &   &  295.23 &  1.27$\times  10^{3}$ &  25.09 \\  
case116 &  28 &  410 &  - &  1.69$\times  10^{7}$ &  - &   &  303.52 &  1.27$\times  10^{3}$ &  25.09 \\  
case57 &  32 &  256 &  - &  1.68$\times  10^{7}$ &  - &   &  324.25 &  1.46$\times  10^{3}$ &  24.03 \\  
s1196a\_7\_4 &  32 &  676 &  - &  4.22$\times  10^{7}$ &  - &   &  343.68 &  1.46$\times  10^{3}$ &  24.97 \\  
s1238a\_15\_7 &  32 &  741 &  - &  4.04$\times  10^{7}$ &  - &   &  343.85 &  1.46$\times  10^{3}$ &  24.88 \\  
s420\_new\_15\_7 &  34 &  317 &  - &  3.41$\times  10^{7}$ &  - &   &  352.52 &  1.55$\times  10^{3}$ &  24.83 \\  
s420\_new\_7\_4 &  34 &  278 &  - &  3.52$\times  10^{7}$ &  - &   &  357.88 &  1.55$\times  10^{3}$ &  24.8 \\  
s420\_new1\_15\_7 &  34 &  332 &  - &  3.52$\times  10^{7}$ &  - &   &  359.18 &  1.55$\times  10^{3}$ &  24.85 \\  
s420\_3\_2 &  34 &  260 &  - &  3.52$\times  10^{7}$ &  - &   &  366.98 &  1.55$\times  10^{3}$ &  24.86 \\  
case\_0\_b12\_1 &  37 &  390 &  - &  1.07$\times  10^{9}$ &  - &   &  390.01 &  1.69$\times  10^{3}$ &  30.04 \\  
backdoor-2x16-8 &  168 &  32 &  - &  1.31$\times  10^{5}$ &  - &   &  405.7 &  1.70$\times  10^{3}$ &  8.0 \\  
case133 &  42 &  169 &  - &  5.50$\times  10^{11}$ &  - &   &  410.72 &  1.92$\times  10^{3}$ &  39.06 \\  
case\_3\_b14\_3 &  40 &  264 &  - &  1.37$\times  10^{11}$ &  - &   &  421.57 &  1.83$\times  10^{3}$ &  37.04 \\  
case132 &  41 &  195 &  - &  2.10$\times  10^{6}$ &  - &   &  423.72 &  1.88$\times  10^{3}$ &  21.0 \\  
case\_1\_b14\_3 &  40 &  264 &  - &  1.37$\times  10^{11}$ &  - &   &  441.42 &  1.83$\times  10^{3}$ &  37.04 \\  
bug1-fix-8 &  105 &  33 &  - &  2.24$\times  10^{6}$ &  - &   &  446.27 &  1.75$\times  10^{3}$ &  20.32 \\  
case\_3\_b14\_1 &  45 &  193 &  - &  1.72$\times  10^{10}$ &  - &   &  467.15 &  2.06$\times  10^{3}$ &  34.04 \\  
case\_1\_b14\_1 &  45 &  193 &  - &  1.72$\times  10^{10}$ &  - &   &  481.67 &  2.06$\times  10^{3}$ &  34.04 \\  
s953a\_7\_4 &  45 &  488 &  - &  4.24$\times  10^{5}$ &  - &   &  493.29 &  2.06$\times  10^{3}$ &  18.58 \\  
case201 &  45 &  155 &  - &  6.71$\times  10^{7}$ &  - &   &  500.23 &  2.06$\times  10^{3}$ &  26.03 \\  
case121-1 &  48 &  243 &  - &  7.52$\times  10^{10}$ &  - &   &  517.84 &  2.20$\times  10^{3}$ &  35.78 \\  
case121 &  48 &  243 &  - &  7.52$\times  10^{10}$ &  - &   &  556.54 &  2.20$\times  10^{3}$ &  35.78 \\  
10.sk\_1\_46 &  47 &  1447 &  - &  4.75$\times  10^{4}$ &  - &   &  560.45 &  2.16$\times  10^{3}$ &  13.56 \\  
bug1-fix-9 &  118 &  37 &  - &  1.34$\times  10^{7}$ &  - &   &  579.96 &  1.94$\times  10^{3}$ &  22.85 \\  
CVE-2007-2875 &  752 &  32 &  - &  4.29$\times  10^{9}$ &  - &   &  654.54 &  1.70$\times  10^{3}$ &  32.01 \\  
case53-1 &  75 &  57 &  62.9 &  1.03$\times  10^{3}$ &  10.0 &   &  661.23 &  2.91$\times  10^{3}$ &  10.01 \\  
case39 &  65 &  180 &  - &  3.60$\times  10^{16}$ &  - &   &  685.54 &  3.00$\times  10^{3}$ &  55.0 \\  
case106 &  60 &  144 &  - &  4.40$\times  10^{12}$ &  - &   &  710.96 &  2.77$\times  10^{3}$ &  42.07 \\  
case40 &  65 &  180 &  - &  3.60$\times  10^{16}$ &  - &   &  712.02 &  3.00$\times  10^{3}$ &  55.0 \\  
bug1-fix-10 &  131 &  41 &  - &  8.06$\times  10^{7}$ &  - &   &  729.35 &  2.14$\times  10^{3}$ &  25.36 \\  
case\_3\_b14\_1-1 &  165 &  73 &  - &  1.68$\times  10^{7}$ &  - &   &  832.09 &  3.68$\times  10^{3}$ &  24.02 \\  
subtraction32 &  65 &  218 &  - &  1.84$\times  10^{19}$ &  - &   &  860.88 &  3.00$\times  10^{3}$ &  64.0 \\  
case211 &  83 &  786 &  - &  1.21$\times  10^{24}$ &  - &   &  879.67 &  3.84$\times  10^{3}$ &  80.03 \\  
floor32 &  65 &  214 &  - &  6.86$\times  10^{14}$ &  - &   &  892.61 &  3.00$\times  10^{3}$ &  46.82 \\  
case146 &  64 &  155 &  - &  7.04$\times  10^{13}$ &  - &   &  920.28 &  2.96$\times  10^{3}$ &  46.03 \\  
case\_1\_b14\_1-1 &  145 &  93 &  - &  1.68$\times  10^{7}$ &  - &   &  1050.59 &  4.62$\times  10^{3}$ &  24.0 \\  
case\_1\_b11\_1 &  48 &  292 &  - &  2.75$\times  10^{11}$ &  - &   &  1164.36 &  2.20$\times  10^{3}$ &  38.03 \\  
ceiling32 &  65 &  277 &  - &  1.24$\times  10^{15}$ &  - &   &  1182.41 &  3.00$\times  10^{3}$ &  47.37 \\  
s420\_new\_15\_7-1 &  235 &  116 &  - &  3.52$\times  10^{7}$ &  - &   &  1187.23 &  5.72$\times  10^{3}$ &  24.78 \\  
decomp64-1 &  485 &  87 &  - &  6.81$\times  10^{38}$ &  - &   &  1232.71 &  4.34$\times  10^{3}$ &  33.01 \\  
case145 &  64 &  155 &  - &  7.04$\times  10^{13}$ &  - &   &  1243.11 &  2.96$\times  10^{3}$ &  46.03 \\  
dyn-fix-2 &  113 &  92 &  - &  8.45$\times  10^{6}$ &  - &   &  1337.35 &  4.58$\times  10^{3}$ &  23.02 \\  
floor32-1 &  150 &  129 &  - &  6.86$\times  10^{14}$ &  - &   &  1414.97 &  6.34$\times  10^{3}$ &  46.43 \\  
ceiling32-1 &  213 &  129 &  - &  1.10$\times  10^{15}$ &  - &   &  1642.07 &  6.34$\times  10^{3}$ &  47.09 \\  
subtraction64 &  129 &  442 &  - &  3.40$\times  10^{38}$ &  - &   &  1670 &  6.00$\times  10^{3}$ &  128.0 \\  
case116-1 &  264 &  174 &  - &  1.69$\times  10^{7}$ &  - &   &  1750.92 &  8.46$\times  10^{3}$ &  24.0 \\  
floor64-1 &  405 &  161 &  - &  2.32$\times  10^{27}$ &  - &   &  1764.2 &  7.85$\times  10^{3}$ &  86.71 \\  
case114-1 &  255 &  173 &  - &  1.71$\times  10^{7}$ &  - &   &  1799.49 &  8.42$\times  10^{3}$ &  24.02 \\  
stmt16\_818\_819 &  185 &  260 &  - &  1.03$\times  10^{10}$ &  - &   &  1823.11 &  8.62$\times  10^{3}$ &  24.96 \\  
squaring4 &  72 &  819 &  - &  6.87$\times  10^{10}$ &  - &   &  1825.71 &  3.33$\times  10^{3}$ &  36.02 \\  
s641\_7\_4 &  54 &  453 &  - &  1.74$\times  10^{12}$ &  - &   &  1849.84 &  2.48$\times  10^{3}$ &  37.89 \\  
case115-1 &  237 &  191 &  - &  1.69$\times  10^{7}$ &  - &   &  1922.39 &  9.26$\times  10^{3}$ &  23.99 \\  
subtraction64-1 &  409 &  162 &  - &  4.86$\times  10^{31}$ &  - &   &  2051.3 &  7.90$\times  10^{3}$ &  100.3 \\  
decomp64 &  381 &  191 &  - &  6.81$\times  10^{38}$ &  - &   &  2239.62 &  9.26$\times  10^{3}$ &  63.0 \\  
squaring2 &  72 &  813 &  - &  6.87$\times  10^{10}$ &  - &   &  2348.6 &  3.33$\times  10^{3}$ &  36.02 \\  
squaring1 &  72 &  819 &  - &  6.87$\times  10^{10}$ &  - &   &  2367.74 &  3.33$\times  10^{3}$ &  36.02 \\  
stmt124\_966\_965 &  393 &  310 &  - &  3.49$\times  10^{10}$ &  - &   &  2551.82 &  1.49$\times  10^{4}$ &  25.87 \\  
squaring6 &  72 &  813 &  - &  6.87$\times  10^{10}$ &  - &   &  2721.73 &  3.33$\times  10^{3}$ &  36.02 \\  
stmt9\_445\_446 &  352 &  306 &  - &  8.60$\times  10^{10}$ &  - &   &  2792.86 &  1.47$\times  10^{4}$ &  32.45 \\  
stmt5\_731\_730 &  379 &  311 &  - &  3.49$\times  10^{10}$ &  - &   &  2814.58 &  1.49$\times  10^{4}$ &  25.97 \\

\bottomrule

\end{longtable}

\end{document}